\title{A Normalizing Computation Rule for Propositional Extensionality in Higher-Order Minimal Logic
%\footnote{This work was partially supported by someone.}
}
\titlerunning{Propositional Extensionality in Higher-Order Minimal Logic} %optional, in case that the title is too long; the running title should fit into the top page column
\author[1]{Robin Adams}
\author[1]{Marc Bezem}
\author[2]{Thierry Coquand}
\affil[1]{Universitetet i Bergen, Institutt for Informatikk, Postboks 7800, N-5020 BERGEN, Norway \\
  \texttt{\{robin.adams,bezem\}@ii.uib.no}}
\affil[2]{Chalmers tekniska högskola, Data- och informationsteknik, 412 96 Göteborg, Sweden \\
  \texttt{coquand@chalmers.se}}
\authorrunning{R. Adams, M. Bezem and T. Coquand} %mandatory. First: Use abbreviated first/middle names. Second (only in severe cases): Use first author plus 'et. al.'
\subjclass{Dummy classification -- please refer to \url{http://www.acm.org/about/class/ccs98-html}}% mandatory: Please choose ACM 1998 classifications from http://www.acm.org/about/class/ccs98-html . E.g., cite as "F.1.1 Models of Computation". 
\keywords{Dummy keyword -- please provide 1--5 keywords}% mandatory: Please provide 1-5 keywords
\newcommand*{\eqdef}{\mathrel{\smash{\stackrel{\text{def}}{=}}}}
\newcommand*{\isotoid}{\ensuremath{\mathsf{isotoid}}}
\newcommand*{\reff}[1]{\ensuremath{\mathrm{ref} \left( {#1} \right)}}
\newcommand*{\univ}[4]{\ensuremath{\mathrm{univ}_{{#1}, {#2}} \left({#3} , {#4} \right)}}
\newcommand\univ*[2]{\ensuremath{\mathsf{univ} \left( {#1} , {#2} \right)}}
\newcommand*{\triplelambda}{\ensuremath{\lambda \!\! \lambda \!\! \lambda}}
\newcommand*{\vald}{\ensuremath{\vdash \mathrm{valid}}}
\newcommand*{\dom}{\ensuremath{\operatorname{dom}}}
\newcommand{\Path}[3]{\ensuremath{\mathsf{Path} \, {#1} \, {#2} \, {#3}}}
\newcommand{\Prop}{\mathsf{Prop}}
\newcommand{\outputt}{\mathsf{output}}
\newcommand{\isProp}[1]{\mathsf{isProp} \left( {#1} \right)}
\newcommand{\mapid}[2]{\mathsf{mapid}_{#1} \, {#2}}
\newcommand{\comp}{\mathsf{comp}}
\newcommand{\id}{\mathsf{id}}
\theoremstyle{plain}
\newtheorem{proposition}[theorem]{Proposition}
\theoremstyle{definition}
\newtheorem{note}[theorem]{Note}
\begin{document}

\maketitle

\begin{abstract}
The univalence axiom expresses the principle of extensionality for dependent type theory. However, if we simply add the univalence axiom to type theory, then we lose the property of \emph{canonicity} --- that every closed term computes to a canonical form. A computation becomes `stuck' when it reaches the point that it needs to evaluate a proof term that is an application of the univalence axiom. So we wish to find a way to compute with the univalence axiom. While this problem has been solved with the formulation of cubical type theory, where the computations are expressed using a nominal extension of lambda-calculus, it may be interesting to explore alternative solutions, which do not require such an extension.

As a first step, we present here a system of propositional higher-order minimal logic (PHOML).  There are three kinds of typing judgement in PHOML.  There are \emph{terms} which inhabit \emph{types}, which are the simple types over $\Omega$.  There are \emph{proofs} which inhabit \emph{propositions}, which are the terms of type $\Omega$.  The canonical propositions are those constructed from $\bot$ by implication $\supset$.  Thirdly, there are \emph{paths} which inhabit \emph{equations} $M =_A N$, where $M$ and $N$ are terms of type $A$.  There are two ways to prove an equality: reflexivity, and \emph{propositional extensionality} --- logically equivalent propositions are equal.  This system allows for some definitional equalities that are not present in cubical type theory, namely that transport along the trivial path is identity.

We present a call-by-name reduction relation for this system, and prove that the system satisfies canonicity: every closed typable term head-reduces to a canonical form.  This work has been formalised in Agda.
 \end{abstract}

%\item Fill out the \verb+\subjclass+ and \verb+\keywords+ macros. For the \verb+\subjclass+, please refer to the ACM classification at \url{http://www.acm.org/about/class/ccs98-html}.
%\item Take care of suitable linebreaks and pagebreaks. No overfull \verb+\hboxes+ should occur in the warnings log.

\section{Introduction}

The \emph{univalence axiom} of Homotopy Type theory (HoTT) \cite{hottbook} postulates a
constant
$$ \isotoid : A \simeq B \rightarrow A = B $$
that is an inverse to the obvious function $A = B \rightarrow A \simeq B$.  However, if we simply add this constant to Martin-L\"{o}f type theory, then
we lose the important property of \emph{canonicity} --- that every closed term of type $A$ computes to a unique canonical object of type $A$.  When a computation reaches a point
where we eliminate a path (proof of equality) formed by $\isotoid$, it gets 'stuck'.

As possible solutions to this problem, we may try to do with a weaker property than canonicity, such as \emph{propositional canonicity}:
that every closed term of type $\mathbb{N}$ is \emph{propositionally} equal to a numeral, as conjectured by Voevodsky.  Or we may attempt to change the definition of equality to make $\isotoid$ definable \cite{Polonsky14a}, or add a nominal extension to the syntax of the type theory (e.g. Cubical Type Theory \cite{cchm:cubical}).

We could also try a more conservative approach, and simply attempt to find a reduction relation for a type theory involving $\isotoid$ that satisfies
all three of the properties above.  There seems to be no reason \emph{a priori} to believe this is not possible, but it is difficult to do because
the full Homotopy Type Theory is a complex and interdependent system.  We can tackle the problem by adding univalence to a much simpler system, finding
a well-behaved reduction relation, then doing the same for more and more complex systems, gradually approaching the full strength of HoTT.

In this paper, we present a system we call PHOML, or predicative higher-order minimal logic.  It is a type theory with three kinds of typing judgement.  There are \emph{terms} which inhabit \emph{types}, which are the simple types over $\Omega$.  There are \emph{proofs} which inhabit \emph{propositions}, which are the terms of type $\Omega$.  The canonical propositions are those constructed from $\bot$ by implication $\supset$.  Thirdly, there are \emph{paths} which inhabit \emph{equations} $M =_A N$, where $M$ and $N$ are terms of type $A$.

There are two canonical forms for proofs of $M =_\Omega N$.  For any term $\varphi : \Omega$, we have $\reff{\varphi} : \varphi =_\Omega \varphi$.  We also add univalence for this system, in this form:
if $\delta : \varphi \supset \psi$ and $\epsilon : \psi \supset\varphi$, then $\univ{\varphi}{\psi}{\delta}{\epsilon} : \varphi =_\Omega \psi$.  

This entails that in PHOML, two propositions that are logically equivalent are equal.  Every function of type $\Omega \rightarrow \Omega$ that can be constructed in PHOML must therefore respect logical equivalence.  That is, for any $F$ and logically equivalent $x$, $y$ we must have that $Fx$ and $Fy$ are logically equivalent. Moreover, if for $x:\Omega$ we have that $Fx$ is logically equivalent to $Gx$, then $F =_{\Omega\to\Omega} G$.
Every function of type $(\Omega \rightarrow \Omega) \rightarrow \Omega$ must respect this equality; and so on.  This is the manifestation in PHOML of the principle that only homotopy invariant constructions can be performed in homotopy type theory.  (See Section \ref{section:exampletwo}.)

We present a call-by-name reduction relation for this system, and prove that every typable term reduces to a canonical form.  From this, it follows that the system is consistent.  

For the future, we wish to include the equations in $\Omega$, allowing for propositions such as $M =_A N \supset N =_A M$.  We wish to expand the system with universal quantification, and expand it to a 2-dimensional system (with equations between proofs).  We then wish to add more inductive types and more dimensions, getting ever closer to full homotopy type theory.

Another system with many of the same aims is cubical type theory \cite{cchm:cubical}.  The system PHOML is almost a subsystem of cubical type theory.  We can attempt to embed PHOML into cubical type theory,
mapping $\Omega$ to the universe $U$, and an equation $M =_A N$ to either the type $\Path{A}{M}{N}$ or to $\mathrm{Id}\ A\ M\ N$.  However, PHOML has more definitional equalities than the relevant fragment of cubical type theory; that is, there are definitionally equal terms in PHOML that are mapped to terms that are not definitionally equal in cubical type theory.  In particular, $\reff{x}^+ p$ and $p$ are definitionally equal, whereas the terms $\mathrm{comp}^i x [] p$ and $p$ are not definitionally equal in cubical type theory (but they are propositionally equal).  See Section \ref{section:cubical} for more information.

The proofs in this paper have been formalized in Agda.  The formalization is available at \texttt{https://github.com/radams78/TYPES2016}.

\section{Predicative Higher-Order Minimal Logic with Extensional Equality}

We call the following type theory PHOML, or \emph{predicative higher-order minimal logic with extensional equality}.  

\subsection{Syntax}

Fix three disjoint, infinite sets of variables, which we shall call \emph{term variables}, \emph{proof variables}
and \emph{path variables}.  We shall use $x$ and $y$ as term variables, $p$ and $q$ as proof variables,
$e$ as a path variable, and $z$ for a variable that may come from any of these three sets.

The syntax of PHOML is given by the grammar:

$$
\begin{array}{lrcl}
\text{Type} & A,B,C & ::= & \Omega \mid A \rightarrow B \\
\text{Term} & L,M,N, \varphi,\psi,\chi & ::= & x \mid \bot \mid \varphi \supset \psi \mid \lambda x:A.M \mid MN \\
\text{Proof} & \delta, \epsilon & ::= & p \mid \lambda p:\varphi.\delta \mid \delta \epsilon \mid P^+ \mid P^- \\
\text{Path} & P, Q & ::= & e \mid \reff{M} \mid P \supset^* Q \mid \univ{\varphi}{\psi}{P}{Q} \mid \\
& & & \triplelambda e : x =_A y. P \mid P_{MN} Q \\
\text{Context} & \Gamma, \Delta, \Theta & ::= & \langle \rangle \mid \Gamma, x : A \mid \Gamma, p : \varphi \mid \Gamma, e : M =_A N \\
\text{Judgement} & \mathbf{J} & ::= & \Gamma \vald \mid \Gamma \vdash M : A \mid \Gamma \vdash \delta : \varphi \mid \\
& & & \Gamma \vdash P : M =_A N
\end{array}
$$

In the path $\triplelambda e : x =_A y . P$, the term variables $x$ and $y$ must be distinct.  (We also have $x \not\equiv e \not\equiv y$, thanks to our
stipulation that term variables and path variables are disjoint.)  The term variable $x$ is bound within $M$ in the term $\lambda x:A.M$,
and the proof variable $p$ is bound within $\delta$ in $\lambda p:\varphi.\delta$.  The three variables $e$, $x$ and $y$ are bound within $P$ in the path
$\triplelambda e:x =_A y.P$.  We identify terms, proofs and paths up to $\alpha$-conversion.  We write $E[z:=F]$ for the result of substituting $F$ for $z$ within
$E$, using $\alpha$-conversion to avoid variable capture.

We shall use the word 'expression' to mean either a type, term, proof, path, or equation (an equation having the form $M =_A N$).  We shall use $E$, $F$, $S$ and $T$ as metavariables that range over expressions.

Note that we use both Roman letters $M$, $N$ and Greek letters $\varphi$, $\psi$, $\chi$ to range over terms.  Intuitively, a term is understood as either a proposition or a function,
and we shall use Greek letters for terms that are intended to be propositions.  Formally, there is no significance to which letter we choose.

Note also that the types of PHOML are just the simple types over $\Omega$; therefore, no variable can occur in a type.

The intuition behind the new expressions is as follows (see also the rules of deduction in Figure \ref{fig:lambdaoe}).  For any object $M : A$, there is the trivial path $\reff{M} : M =_A M$.  The constructor $\supset^*$ ensures congruence for $\supset$ --- if $P : \varphi =_\Omega \varphi'$ and $Q : \psi =_\Omega \psi'$ then $P \supset^* Q : \varphi \supset \psi =_\Omega \varphi' \supset \psi'$.  The constructor $\mathsf{univ}$ gives 'univalence' (propositional extensionality) for our propositions: if $\delta : \varphi \supset \psi$ and $\epsilon : \psi \supset \varphi$, then $\univ{\varphi}{\psi}{\delta}{\epsilon}$ is a path $\varphi =_\Omega \psi$.  The constructors $^+$ and $^-$ are the converses, which denote the action of transport along a path: if $P$ is a path of type $\varphi =_\Omega \psi$, then $P^+$ is a proof of $\varphi \supset \psi$, and $P^-$ is a proof of $\psi \supset \varphi$.  

The constructor $\triplelambda$ gives functional extensionality.  Let $F$ and $G$ be functions of type $A \rightarrow B$.  If $F x =_B G y$ whenever $x =_A y$, then $F =_{A \rightarrow B} G$.  More formally, if $P$ is a path of type $Fx =_B Gy$ that depends on $x : A$, $y : A$ and $e : x =_A y$, then $\triplelambda e : x =_A y . P$ is a path of type $F =_{A \rightarrow B} G$.  The proofs $P^+$ and $P^-$ represent transport along the path $P$.

Finally, if $P$ is a path $M =_{A \rightarrow B} M'$, and $Q$ is a path $N =_A N'$, then $P_{MN} Q$ is a path $MN =_B M'N'$.

\subsubsection{Substitution and Path Substitution}

Intuitively, if $N$ and $N'$ are equal then $M[x:=N]$ and $M[x:=N']$ should be equal.  To handle this syntactically,
we introduce a notion of \emph{path substitution}.  If $N$, $M$ and $M'$ are terms, $x$ a term variable, and $P$ a path, then we shall define a path $N \{ x := P : M = M' \}$.  The intention is that, if
$\Gamma \vdash P : M =_A M'$ and $\Gamma, x : A \vdash N : B$ then $\Gamma \vdash N \{ x := P : M = M' \} : N [ x:= M ] =_B N [ x := M' ]$ (see Lemma \ref{lm:pathsub}). 

\begin{definition}[Path Substitution]
Given terms $M_1$, \ldots, $M_n$ and $N_1$, \ldots, $N_n$; paths $P_1$, \ldots, $P_n$; term variables $x_1$, \ldots, $x_n$; and a term $L$, define the path $$L \{ x_1 := P_1 : M_1 = N_1 , \ldots, x_n := P_n : M_n = N_n \}$$ as follows.
\begin{align*}
x_i \{ \vec{x} := \vec{P} : \vec{M} = \vec{N} \} & \eqdef P_i \\
y \{ \vec{x} := \vec{P} : \vec{M} = \vec{N} \} & \eqdef \reff{y} \qquad (y \not\equiv x_1, \ldots, x_n) \\
\bot \{ \vec{x} := \vec{P} : \vec{M} = \vec{N} \} & \eqdef \reff{\bot} \\
(LL') \{ \vec{x} := \vec{P} : \vec{M} = \vec{N} \} \\
\omit\rlap{\qquad \qquad $\eqdef L \{ \vec{x} := \vec{P} : \vec{M} = \vec{N} \}_{L' [\vec{x} := \vec{M}] L' [\vec{x} := \vec{N}]} L' \{ \vec{x} := \vec{P} : \vec{M} = \vec{N} \}$} \\
(\lambda y:A.L) \{ \vec{x} := \vec{P} : \vec{M} = \vec{N} \} & \\
\omit\rlap{\qquad\qquad $\eqdef \triplelambda e : a =_A a' . L \{ \vec{x} := \vec{P} : \vec{M} = \vec{N} , y := e : a = a' \}$} \\
(\varphi \supset \psi) \{ \vec{x} := \vec{P} : \vec{M} = \vec{N} \} & \eqdef \varphi \{ \vec{x} := \vec{P} : \vec{M} = \vec{N} \} \supset^* \psi \{ \vec{x} := \vec{P} : \vec{M} = \vec{N} \}
\end{align*}
\end{definition}

We shall often omit the endpoints $\vec{M}$ and $\vec{N}$.

\begin{note}
The case $n = 0$ is permitted, and we shall have that, if $\Gamma \vdash M : A$ then $\Gamma \vdash M \{\} : M =_A M$.  There are thus two paths from a term $M$ to itself: $\reff{M}$ and $M \{\}$.  There are not always equal; for example, $(\lambda x:A.x) \{\} \equiv \triplelambda e : x =_A y. e$, which (after we define the reduction relation) will not be convertible with $\reff{\lambda x:A.x}$.
\end{note}

The following lemma shows how substitution and path substitution interact.

\begin{lemma}
\label{lm:subpathsub}
Let $\vec{y}$ be a sequences of variables and $x$ a distinct variable.  Then
\begin{enumerate}
\label{lm:pathsubsub}
\item
\label{lm:subpathsubi}
$ \begin{aligned}[t]
& M [ x:= N ] \{ \vec{y} := \vec{P} : \vec{L} = \vec{L'} \} \\
& \equiv M \{ x := N \{ \vec{y} := \vec{P} : \vec{L} = \vec{L'} \} : N [ \vec{y}:= \vec{L} ] = N [ \vec{y} := \vec{L'} ], \vec{y} := \vec{P} : \vec{L} = \vec{L'} \}
\end{aligned} $
\item
\label{lm:subpathsubii}
$ \begin{aligned}[t]
& M \{ \vec{y} := \vec{P} : \vec{L} = \vec{L'} \} [ x := N ] \\
& \equiv M \{ \vec{y} := \vec{P} [x := N] : \vec{L} [x := N] = \vec{L'} [x := N], x := \reff{N} : N = N \}
\end{aligned} $
\end{enumerate}
\end{lemma}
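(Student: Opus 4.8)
The plan is to prove both statements by structural induction on the term $M$, with the substitution data $\vec{y}$, $\vec{P}$, $\vec{L}$, $\vec{L'}$, $x$ and $N$ universally quantified in the induction hypothesis. Since path substitution $L\{\cdots\}$ is defined by recursion on the shape of $L$, the induction splits into exactly the clauses of that definition: the two variable cases, $\bot$, application, $\lambda$-abstraction, and implication. The two parts are independent; each is proved by its own induction, but the case analyses run in parallel.

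For part \ref{lm:subpathsubi}, the base cases are immediate unfoldings. If $M \equiv x$, both sides reduce to $N \{ \vec{y} := \vec{P} \}$ --- on the right because the extended list sends $x$ to exactly this path. If $M \equiv y_i$, both sides reduce to $P_i$, using $x \not\equiv y_i$ so that $y_i[x:=N] \equiv y_i$. For any other variable, and for $\bot$, both sides are a reflexivity path, and the implication case is a direct double application of the induction hypothesis. The application and $\lambda$-abstraction cases carry the real content. For $M \equiv L L'$, I unfold the application clause on both sides and apply the induction hypothesis to $L$ and to $L'$; the two head paths then agree, and it remains to reconcile the endpoint subscripts. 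This is done by the ordinary simultaneous substitution lemma for terms: specifically $(L'[x:=N])[\vec{y} := \vec{L}] \equiv L'[ x := N[\vec{y} := \vec{L}], \vec{y} := \vec{L} ]$, and likewise with $\vec{L'}$. Since terms contain neither proof nor path variables, this auxiliary lemma is just the familiar composition-of-substitutions result. For $M \equiv \lambda w : A . L$, I use $\alpha$-conversion to assume $w \not\equiv x$ and that $w$ is not free in $N$, unfold to $\triplelambda e : a =_A a' . (\cdots)$ with $e$, $a$, $a'$ fresh, and apply the induction hypothesis to $L$ with the list extended by $w := e : a = a'$.

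Part \ref{lm:subpathsubii} runs completely parallel. The only subtlety in the base cases is $M \equiv x$: on the left $x\{\vec{y}:=\vec{P}\} \equiv \reff{x}$ because $x$ is not among $\vec{y}$, and then $\reff{x}[x:=N] \equiv \reff{N}$, matching the right-hand side, where the extended list sends $x$ directly to $\reff{N}$. The application and abstraction cases again combine the induction hypothesis with the term substitution lemma, now in the form $(L'[\vec{y} := \vec{L}])[x:=N] \equiv L'[\vec{y} := \vec{L}[x:=N], x := N]$ for the endpoint subscripts, and with the usual freshness proviso on the bound variable in the abstraction case.

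I expect the main obstacle to be administrative rather than conceptual: discharging the freshness conditions on the bound term variable and on the generated path variable $e$ and its endpoints $a, a'$ in the abstraction case, and threading the substitution lemma cleanly through the endpoint subscripts of every application path so that the decorated terms match on the nose. None of these steps is deep, but they are precisely the bookkeeping that must be made watertight, and it is reassuring that the Agda formalisation discharges exactly these obligations.
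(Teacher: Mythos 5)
Your proposal is correct and follows exactly the paper's route: the paper proves this lemma by "an easy induction on $M$ in all cases", which is precisely the structural induction you carry out, with the endpoint subscripts in the application case handled by the ordinary composition-of-substitutions lemma that the paper records in the Note immediately following. The details you supply (the variable base cases, the extension of the substitution list in the abstraction case, and the freshness bookkeeping) are all the right ones and check out against the definition of path substitution.
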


\begin{proof}
An easy induction on $M$ in all cases.
\end{proof}

\begin{note}
The familiar substitution lemma also holds as usual: $t [\vec{z_1} := \vec{s_1}] [\vec{z_2} := \vec{s_2}] \equiv t [\vec{z_1} := \vec{s_1}[\vec{z_2} := \vec{s_2}], 
\vec{z_2} := \vec{s_2}]$.  We cannot form a lemma about the fourth case, simplifying $M \{ \vec{x} := \vec{P} \} \{ \vec{y} := \vec{Q} \}$, because
$M \{ \vec{x} := \vec{P} \}$ is a path, and path substitution can only be applied to a term.
\end{note}

We introduce a notation for simultaneous substitution and path substitution of several variables:

\begin{definition}
A \emph{substitution} is a function that maps term variables to terms, proof variables to proofs, and path variables to paths.
We write $E[\sigma]$ for the result of substituting the expression $\sigma(z)$ for $z$ in $E$, for each variable $z$ in the domain of $\sigma$.

A \emph{path substitution} $\tau$ is a function whose domain is a finite set of term variables,
and which maps each term variable to a path.  Given a path substitution $\tau$ and substitutions $\rho$, $\sigma$
with the same domain $\{ x_1, \ldots, x_n \}$, we write
$$ M \{ \tau : \rho = \sigma \} \text{ for } M \{ x_1 := \tau(x_1) : \rho(x_1) = \sigma(x_1), \ldots, \tau(x_n) : \rho(x_n) = \sigma(x_n) \} \enspace . $$
\end{definition}

\subsubsection{Call-By-Name Reduction}

\begin{definition}[Call-By-Name Reduction]
Define the relation of \emph{call-by-name reduction} $\rightarrow$ on the expressions.  The inductive
definition is given by the rules in Figure \ref{fig:reduction}

\begin{figure}
\paragraph*{Reduction on Terms}
$$ \infer{(\lambda x:A.M)N \rightarrow M[x:=N]}{} \quad
\infer{MN \rightarrow M'N}{M \rightarrow M'} $$
$$ \infer{\varphi \supset \psi \rightarrow \varphi' \supset \psi}{\varphi \rightarrow \varphi'} \quad
\infer{\varphi \supset \psi \rightarrow \varphi \supset \psi'}{\psi \rightarrow \psi'} $$
\paragraph*{Reduction on Proofs}
$$\infer{(\lambda p : \varphi . \delta)\epsilon \rightarrow \delta [ p := \epsilon ]}{} \quad
\infer{\reff{\varphi}^+ \rightarrow \lambda p : \varphi . p}{} \quad
\infer{\reff{\varphi}^- \rightarrow \lambda p : \varphi . p}{} $$
$$ \infer{\delta \epsilon \rightarrow \delta' \epsilon}{\delta \rightarrow \delta'} \quad
\infer{\univ{\varphi}{\psi}{\delta}{\epsilon}^+ \rightarrow \delta}{} \quad
\infer{\univ{\varphi}{\psi}{\delta}{\epsilon}^- \rightarrow \epsilon}{} $$
$$ \infer{P^+ \rightarrow Q^+}{P \rightarrow Q} \quad
\infer{P^- \rightarrow Q^-}{P \rightarrow Q}$$
\paragraph*{Reduction on Paths}
$$\infer{(\triplelambda e:x =_A y.P)_{MN} Q \rightarrow P [x := M, y := N, e := Q]}{} $$
$$ \infer{\reff{\lambda x:A.M}_{N N'} P \rightarrow M \{ x:=P : N = N' \}}{} $$
$$ \infer{\reff{\varphi} \supset^* \reff{\psi} \rightarrow \reff{\varphi \supset \psi}}{} $$
$$ \infer{\reff{\varphi} \supset^* \univ{\psi}{\chi}{\delta}{\epsilon} \rightarrow 
\univ{\varphi \supset \psi}{\varphi \supset \chi}{\lambda p:\varphi \supset \psi. \lambda q : \varphi. \delta (pq)}{\lambda p : \varphi \supset \chi. \lambda q : \varphi. \epsilon (pq)}}{} $$
$$ \infer{\univ{\varphi}{\psi}{\delta}{\epsilon} \supset^* \reff{\chi} \rightarrow
\univ{\varphi \supset \chi}{\psi \supset \chi}{\lambda p : \varphi \supset \chi. \lambda q : \psi. p (\epsilon q)}{\lambda p:\psi \supset \chi. \lambda q : \varphi. p(\delta q)}}{} $$
$$ \infer{\begin{array}{l}
\univ{\varphi}{\psi}{\delta}{\epsilon} \supset^* \univ{\varphi'}{\psi'}{\delta'}{\epsilon'} \\
 \rightarrow
\univ{\varphi \supset \varphi'}{\psi \supset \psi'}{\lambda p : \varphi \supset \varphi'. \lambda q : \psi. \delta' (p(\epsilon q))}{\lambda p : \psi \supset \psi'. \lambda q : \varphi. \epsilon' (p (\delta q))}
\end{array}}{} $$
$$ \infer{P_{MN} Q \rightarrow P'_{MN} Q}{P \rightarrow P'} \quad
\infer{\reff{M}_{NN'}P \rightarrow \reff{M'}_{NN'}P}{M \rightarrow M'} $$
$$ \infer{P \supset^* Q \rightarrow P' \supset^* Q}{P \rightarrow P'} \quad
\infer{P \supset^* Q \rightarrow P \supset^* Q'}{Q \rightarrow Q'} $$
\caption{Reduction in PHOML}
\label{fig:reduction}
\end{figure}
\end{definition}

\begin{lemma}[Confluence]
\label{lm:diamond}
If $E \twoheadrightarrow F$ and $E \twoheadrightarrow G$, then there exists $H$ such that $F \twoheadrightarrow H$ and $G \twoheadrightarrow H$.
\end{lemma}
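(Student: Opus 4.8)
The plan is to derive confluence of $\twoheadrightarrow$ from the \emph{diamond property of single-step reduction} itself: I will show that whenever $E \to F$ and $E \to G$, either $F \equiv G$ or there is an $H$ with $F \to H$ and $G \to H$, with just one step on each side. Once this is established, confluence of the reflexive–transitive closure follows by the standard argument that the diamond property of a relation lifts to the Church–Rosser property of its reflexive–transitive closure (the strip lemma, proved by induction on the length of $E \twoheadrightarrow G$, followed by a second induction on the length of $E \twoheadrightarrow F$). This route bypasses the Tait–Martin-L\"of parallel-reduction machinery; it works here precisely because the reduction of Figure \ref{fig:reduction} is call-by-name, and I expect that restriction to be exactly what makes the one-step diamond go through.

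First I would make precise which positions of an expression are \emph{active}, i.e. can be rewritten: the function of a term- or proof-application, either side of $\supset$ and of $\supset^*$, the path inside $P^+$ and $P^-$, the function path of $P_{MN}Q$, and the term inside a $\reff{-}$ that heads an application path. Every other position is \emph{frozen}: the argument of any application, the subscripts of $P_{MN}Q$, the body of any binder ($\lambda$ and $\triplelambda$), the four components of a univalence path, and the content of an isolated $\reff{-}$. The crucial structural lemma is then that \emph{the redexes of any expression occupy pairwise disjoint (incomparable) positions}. I would prove this by inspecting the contraction rules and checking two things: first, that within each left-hand side every proper subterm that could itself be a redex sits in a frozen position — in $(\lambda x:A.M)N$ the body $M$ is under a binder and $N$ is an argument, in $\reff{\lambda x:A.M}_{NN'}P$ the abstraction cannot reduce and $N$, $N'$, $P$ are frozen, and in the four $\supset^*$ rules both immediate components are reflexivity or univalence paths and hence irreducible; second, that two distinct contraction rules never fire at one position, since their side conditions (being a $\lambda$-abstraction, a reflexivity path, or a univalence path) are mutually exclusive.

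Given disjointness, the single-step diamond is immediate: for distinct redexes at incomparable positions $\pi_1 \neq \pi_2$, contracting the one at $\pi_1$ leaves the subterm at $\pi_2$ untouched, so a unique residual of the second redex survives and can be contracted, and performing both contractions in either order yields the same $H$. The phenomenon that makes this work — and which I expect to be the delicate point to articulate correctly — is that an active redex is never \emph{duplicated while remaining active}: the classical failure of the one-step diamond for $\beta$-reduction comes from copying a reducible argument into several places, but here arguments and binder-bodies are frozen, so the copies produced by a substitution become active only \emph{after} the contraction, never generating a competing fork.

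The main obstacle is therefore not conceptual but the exhaustiveness of the case analysis. One must run through every pair consisting of a contraction rule together with a congruence rule, and every pair of the $\supset$- and $\supset^*$-congruences, confirming in each case that the two redexes are genuinely disjoint and that the residual bookkeeping closes — in particular for the three substitution-bearing contractions (term-, proof-, and path-$\beta$) and for the path rule $\reff{\lambda x:A.M}_{NN'}P \to M \{ x := P : N = N' \}$, whose right-hand sides are built by (path) substitution. Should any genuine critical pair surface during this inspection, I would fall back on defining a parallel reduction that contracts only redexes in active positions and prove its diamond property via a complete-development operator; but I anticipate the disjointness argument above to suffice.
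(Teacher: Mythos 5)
Your proposal is correct, but it takes a genuinely different route from the paper. The paper follows the Tait--Martin-L\"of/Luo method: it defines a \emph{parallel one-step reduction} $\rhd$ (Figure \ref{fig:POSR}) that simultaneously contracts a redex and reduces inside its components, proves the diamond property for $\rhd$ by case analysis on pairs of $\rhd$-steps, and then transfers the result to $\twoheadrightarrow$ via the sandwich $\rightarrow \ \subseteq\ \rhd\ \subseteq\ \twoheadrightarrow$ (Lemma \ref{lm:rhdiff}). You instead exploit the call-by-name discipline directly: your key structural observation --- that every left-hand side of a contraction rule is rigid at all its active positions (abstractions, $\triplelambda$-paths, standalone $\reff{-}$ and $\univ*{-}{-}$ do not reduce, and arguments, binder bodies, subscripts and the components of $\mathsf{univ}$ are frozen) --- does hold for Figure \ref{fig:reduction}, so contraction sites are pairwise incomparable, the one-step relation satisfies the diamond property up to its reflexive closure, and the strip lemma finishes the job. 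What your approach buys is economy: no auxiliary relation, no transfer lemma, and a conceptual explanation of \emph{why} this system is confluent (call-by-name never duplicates an active redex). What the paper's approach buys is robustness: the parallel-reduction argument is insensitive to whether redexes nest or overlap, so it would survive extensions of the reduction relation (e.g.\ reduction under binders or in argument position), whereas your disjointness lemma is a global property of the whole rule set that must be re-verified after any such change, and its exhaustive verification --- which you rightly identify as the main burden --- is comparable in size to the case analysis for the diamond property of $\rhd$.
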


\begin{proof}
The proof is given in Appendix \ref{section:confluence}.
\end{proof}

\begin{lemma}[Reduction respects path substitution]
\label{lm:resp-sub}
If $M \rightarrow N$ then $M \{ \tau : \rho = \sigma \} \rightarrow N \{ \tau : \rho = \sigma \}$.
\end{lemma}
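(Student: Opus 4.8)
The plan is to induct on the derivation of $M \rightarrow N$. Because $M$ is a term and path substitution is defined only on terms, the last rule in the derivation must be one of the four term-reduction rules: $\beta$-contraction, the application congruence, or one of the two $\supset$-congruences. The three congruence cases will be routine appeals to the induction hypothesis; the $\beta$-case will carry essentially all the weight.

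For the congruence cases I would simply unfold the definition of path substitution and match the result against a congruence rule for reduction on paths. For $LN \rightarrow L'N$ coming from $L \rightarrow L'$, the definition gives
$$(LN)\{\tau:\rho=\sigma\} \equiv L\{\tau:\rho=\sigma\}_{N[\rho]\,N[\sigma]}\,N\{\tau:\rho=\sigma\},$$
and likewise for $L'N$ with $L'$ in place of $L$; the induction hypothesis yields $L\{\tau:\rho=\sigma\} \rightarrow L'\{\tau:\rho=\sigma\}$, and the rule $P_{MN}Q \rightarrow P'_{MN}Q$ closes the case. The two implication congruences have the same shape: $(\varphi\supset\psi)\{\tau:\rho=\sigma\} \equiv \varphi\{\tau:\rho=\sigma\}\supset^*\psi\{\tau:\rho=\sigma\}$, and I would invoke the induction hypothesis together with $P\supset^*Q \rightarrow P'\supset^*Q$ (respectively $P\supset^*Q\rightarrow P\supset^*Q'$).

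The heart of the proof is $\beta$-contraction, $(\lambda x:A.L)L' \rightarrow L[x:=L']$. Unfolding path substitution on the redex turns it into a $\triplelambda$-redex:
$$((\lambda x:A.L)L')\{\tau:\rho=\sigma\} \equiv (\triplelambda e:a=_A a'.\,L\{\tau:\rho=\sigma,\,x:=e:a=a'\})_{L'[\rho]\,L'[\sigma]}\,L'\{\tau:\rho=\sigma\}.$$
By the $\triplelambda$ rule for reduction on paths this contracts in one step to
$$L\{\tau:\rho=\sigma,\,x:=e:a=a'\}\,[\,a:=L'[\rho],\;a':=L'[\sigma],\;e:=L'\{\tau:\rho=\sigma\}\,].$$
On the other side, Lemma \ref{lm:subpathsub}(\ref{lm:subpathsubi}) rewrites the target as
$$(L[x:=L'])\{\tau:\rho=\sigma\} \equiv L\{\tau:\rho=\sigma,\,x:=L'\{\tau:\rho=\sigma\}:L'[\rho]=L'[\sigma]\},$$
so the case reduces to checking that these two right-hand sides coincide.

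The hard part will therefore be the purely syntactic identity
$$L\{\tau,\,x:=e:a=a'\}\,[\,a:=L'[\rho],\,a':=L'[\sigma],\,e:=L'\{\tau\}\,] \equiv L\{\tau,\,x:=L'\{\tau\}:L'[\rho]=L'[\sigma]\}.$$
Its content is that, since $a$, $a'$ and $e$ are chosen fresh, they occur in $L\{\tau,x:=e:a=a'\}$ only where the $x$-entry placed them — $e$ in value position and $a,a'$ in the endpoint annotations — so the final substitution merely installs $L'\{\tau\}$, $L'[\rho]$ and $L'[\sigma]$ into that entry. I would establish it as a small auxiliary lemma by induction on $L$, in the same style as Lemma \ref{lm:subpathsub}, the only delicate subcases being $\lambda$ and application, where one must check that the freshly bound path variable causes no capture and that substituting into the endpoints commutes with the recursive call. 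Everything outside that identity is bookkeeping with the definitions and the induction hypothesis.
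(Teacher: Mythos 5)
Your proof follows essentially the same route as the paper's: induction on $M \rightarrow N$, with the congruence cases discharged by the corresponding path-congruence rules and the $\beta$-case handled by unfolding the redex into a $\triplelambda$-redex, contracting it, and invoking Lemma~\ref{lm:subpathsub}(\ref{lm:subpathsubi}) to identify the result with $(L[x:=L'])\{\tau:\rho=\sigma\}$. You are in fact slightly more explicit than the paper, which silently absorbs your auxiliary syntactic identity about installing $L'\{\tau\}$, $L'[\rho]$, $L'[\sigma]$ for the fresh variables $e$, $a$, $a'$ into a single displayed reduction step; isolating it as a lemma proved by induction on $L$ is a reasonable way to make that step precise.
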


\begin{proof}
Induction on $M \rightarrow N$.  The only difficult case is $\beta$-contraction.  We have
\begin{align*}
& ((\lambda x:A.M)N)\{ \tau : \rho = \sigma \} \\
\equiv & (\triplelambda e : x =_A x' . M \{ \tau : \rho = \sigma , x := e : x = x' \})_{N [ \rho ] N [ \sigma ]} N \{ \tau : \rho = \sigma \} \\
\rightarrow & M \{ \tau : \rho = \sigma, x := N \{ \tau \} : N [ \rho ] = N [ \sigma ] \} \\
\equiv & M [ x := N ] \{ \tau : \rho = \sigma \} & (\text{Lemma \ref{lm:pathsubsub}})
\end{align*}
\end{proof}

We write $\rightarrow^?$ for the reflexive closure of $\rightarrow$, 
we write $\twoheadrightarrow$ for the reflexive transitive closure of $\rightarrow$, and we write $\simeq$ for the reflexive symmetric transitive closure of $\rightarrow$.
We say an expression $E$ is in \emph{normal form} iff there is no expression $F$ such that $E \rightarrow F$.

\begin{note}
$ $
\begin{enumerate}
\item
Reduction on proofs and paths does \emph{not} respect substitution.  For example, let $M \equiv \lambda x:\Omega.x$.  Then we have
\begin{align}
\reff{\lambda y : \Omega. y'}_{\bot \bot} \reff{\bot} & \rightarrow y' \{ y:= \reff{\bot} : \bot = \bot \} \equiv \reff{y'} \nonumber \\
(\reff{\lambda y : \Omega.y'}_{\bot \bot} \reff{\bot}) [y' := M] & \equiv \reff{\lambda x : \Omega.M}_{\bot \bot} \reff{\bot}
\label{eq:exp1} \\
\label{eq:exp2}
\reff{y'}[y':=M] & \equiv \reff{M} \equiv \reff{\lambda x : \Omega.x}
\end{align}
Expression (\ref{eq:exp1}) does not reduce to (\ref{eq:exp2}).  Instead, (\ref{eq:exp1}) reduces to
$$ M \{ y := \reff{\bot} : \bot = \bot \} \equiv \triplelambda e : x =_\Omega x'. e \enspace . $$
\item
Reduction on terms does respect substitution: if $M \rightarrow N$ then $M[x:=P] \rightarrow N[x:=P]$, as is easily shown by induction on $M \rightarrow N$.
\end{enumerate}
\end{note}

\subsection{Rules of Deduction}

The rules of deduction of PHOML are given in Figure \ref{fig:lambdaoe}.

\newcommand{\RvarT}{\ensuremath(\mathsf{varT})}
\begin{figure}
\paragraph*{Contexts}
$$ (\langle \rangle) \quad \vcenter{\infer{\langle \rangle \vald}{}} \qquad
(\mathrm{ctx}_T) \quad \vcenter{\infer{\Gamma, x : A \vald}{\Gamma \vald}} \qquad 
(\mathrm{ctx}_P) \quad \vcenter{\infer{\Gamma, p : \varphi \vald}{\Gamma \vdash \varphi : \Omega}} $$
$$ (\mathrm{ctx}_E) \quad \vcenter{\infer{\Gamma, e : M =_A N \vald}{\Gamma \vdash M : A \quad \Gamma \vdash N : A}} $$
$$ (\mathrm{var}_T) \quad \vcenter{\infer[(x : A \in \Gamma)]{\Gamma \vdash x : A}{\Gamma \vald}} \qquad
(\mathrm{var}_P) \quad \vcenter{\infer[(p : \varphi \in \Gamma)]{\Gamma \vdash p : \varphi}{\Gamma \vald}} $$
$$ (\mathrm{var}_E) \quad \vcenter{\infer[(e : M =_A N \in \Gamma)]{\Gamma \vdash e : M =_A N}{\Gamma \vald}} $$

\paragraph*{Terms}
$$ (\bot) \quad \vcenter{\infer{\Gamma \vdash \bot : \Omega}{\Gamma \vald}} \qquad
(\supset) \quad \vcenter{\infer{\Gamma \vdash \varphi \supset \psi : \Omega}{\Gamma \vdash \varphi : \Omega \quad \Gamma \vdash \psi : \Omega}} $$
$$ (\mathrm{app}_T) \quad \vcenter{\infer{\Gamma \vdash M N : B} {\Gamma \vdash M : A \rightarrow B \quad \Gamma \vdash N : A}} \qquad
(\lambda_T) \quad \vcenter{\infer{\Gamma \vdash \lambda x:A.M : A \rightarrow B}{\Gamma, x : A \vdash M : B}} $$

\paragraph*{Proofs}
$$ (\mathrm{app}_P) \quad \vcenter{\infer{\Gamma \vdash \delta \epsilon : \psi} {\Gamma \vdash \delta : \varphi \supset \psi \quad \Gamma \vdash \epsilon : \varphi}} \qquad
(\lambda_P) \quad \vcenter{\infer{\Gamma \vdash \lambda p : \varphi . \delta : \varphi \supset \psi}{\Gamma, p : \varphi \vdash \delta : \psi}} $$
$$ (\mathrm{conv}_P) \quad \vcenter{\infer[(\varphi \simeq \psi)]{\Gamma \vdash \delta : \psi}{\Gamma \vdash \delta : \varphi \quad \Gamma \vdash \psi : \Omega}} $$

\paragraph*{Paths}
$$ (\mathrm{ref}) \quad \vcenter{\infer{\Gamma \vdash \reff{M} : M =_A M}{\Gamma \vdash M : A}}
\qquad
(\supset^*) \quad \vcenter{\infer{\Gamma \vdash P \supset^* Q : \varphi \supset \psi =_\Omega \varphi' \supset \psi'}{\Gamma \vdash P : \varphi =_\Omega \varphi' \quad \Gamma \vdash Q : \psi =_\Omega \psi'}} $$
$$ (\mathrm{univ}) \quad \vcenter{\infer{\Gamma \vdash \univ{\varphi}{\psi}{\delta}{\epsilon} : \varphi =_\Omega \psi}{\Gamma \vdash \delta : \varphi \supset \psi \quad \Gamma \vdash \epsilon : \psi \supset \varphi}} $$
$$ (\mathrm{plus}) \quad \vcenter{\infer{\Gamma \vdash P^+ : \varphi \supset \psi}{\Gamma \vdash P : \varphi =_\Omega \psi}}
\qquad
(\mathrm{minus}) \quad \vcenter{\infer{\Gamma \vdash P^- : \psi \supset \varphi}{\Gamma \vdash P : \psi =_\Omega \psi}} $$
$$ (\triplelambda) \quad \vcenter{\infer{\Gamma \vdash \triplelambda e : x =_A y . P : M =_{A \rightarrow B} N}
  {\begin{array}{c}
     \Gamma, x : A, y : A, e : x =_A y \vdash P : M x =_B N y \\
     \Gamma \vdash M : A \rightarrow B \quad
\Gamma \vdash N : A \rightarrow B
     \end{array}}} $$
$$ (\mathrm{app}_E) \quad \vcenter{\infer{\Gamma \vdash P_{NN'}Q : MN =_B M' N'}{\Gamma \vdash P : M =_{A \rightarrow B} M' \quad \Gamma \vdash Q : N =_A N' \quad \Gamma \vdash N : A \quad \Gamma \vdash N' : A}} $$
$$ (\mathrm{conv}_E) \quad \vcenter{\infer[(M \simeq M', N \simeq N')]{\Gamma \vdash P : M' =_A N'}{\Gamma \vdash P : M =_A N \quad \Gamma \vdash M' : A \quad \Gamma \vdash N' : A}} $$
\caption{Rules of Deduction of $\lambda oe$}
\label{fig:lambdaoe}
\end{figure}

\subsubsection{Metatheorems}

\label{section:meta}

In the lemmas that follow, the letter $\mathcal{J}$ stands for any of the expressions that may occur to the right of the turnstile in a judgement, i.e.~$\mathrm{valid}$, $M : A$, $\delta : \varphi$, or $P : M =_A N$.

\begin{lemma}[Context Validity]
Every derivation of $\Gamma, \Delta \vdash \mathcal{J}$ has a subderivation of $\Gamma \vald$.
\end{lemma}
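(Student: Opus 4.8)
The plan is to proceed by induction on the derivation of $\Gamma, \Delta \vdash \mathcal{J}$, proving the statement uniformly for every way of splitting the ambient context as $\Gamma, \Delta$. Generalising over the split point is the key manoeuvre: it is what lets the binder rules go through without any separate inversion step, since a variable freshly bound in a premise can simply be absorbed into the suffix $\Delta$.

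The base case is the rule $(\langle \rangle)$, whose conclusion is $\langle \rangle \vald$; here $\Gamma$ and $\Delta$ must both be empty, and the required subderivation of $\langle \rangle \vald$ is the derivation itself. For the remaining context-formation rules $(\mathrm{ctx}_T)$, $(\mathrm{ctx}_P)$ and $(\mathrm{ctx}_E)$ I distinguish two subcases according to whether $\Delta$ is empty. If $\Delta = \langle \rangle$, then $\Gamma$ is the entire conclusion context, so again the whole derivation serves as the subderivation of $\Gamma \vald$. If instead $\Delta$ ends with the entry introduced by the rule, I strip that entry off, observe that the rule's premise has context $\Gamma, \Delta'$ for the shorter suffix $\Delta'$, and apply the induction hypothesis to that premise to extract the subderivation of $\Gamma \vald$.

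Every other rule has a premise whose context extends $\Gamma$, and the argument is then uniform: I apply the induction hypothesis to one such premise. Concretely, the variable rules together with $(\bot)$, $(\supset)$, $(\mathrm{app}_T)$, $(\mathrm{app}_P)$, $(\mathrm{conv}_P)$, $(\mathrm{ref})$, $(\supset^*)$, $(\mathrm{univ})$, $(\mathrm{plus})$, $(\mathrm{minus})$, $(\mathrm{app}_E)$ and $(\mathrm{conv}_E)$ all have a premise with the very same context $\Gamma, \Delta$, so the induction hypothesis applies directly with the same split. The binder rules $(\lambda_T)$, $(\lambda_P)$ and $(\triplelambda)$ are where the generalisation pays off: their principal premise carries the longer context $\Gamma, \Delta, \ldots$, and I invoke the induction hypothesis on that premise under the split $\Gamma, (\Delta, \ldots)$, which again yields the desired subderivation of $\Gamma \vald$.

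I expect no genuine obstacle here; the one point requiring care is the bookkeeping of the split across the binder rules, that is, stating the claim for an arbitrary suffix $\Delta$ rather than only for $\Delta = \langle \rangle$, so that the induction hypothesis remains applicable precisely when a rule enlarges the context. Once that generalisation is in place, every case is immediate.
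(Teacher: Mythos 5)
Your proof is correct and is precisely the argument the paper intends by its one-line ``Induction on derivations'': the statement is already generalised over the split $\Gamma, \Delta$, and your case analysis (context rules handled by stripping the last entry or taking the whole derivation, binder rules by absorbing the new entry into $\Delta$, all other rules by passing to a premise with the same context) is the standard way to discharge it.
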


\begin{proof}
Induction on derivations.
\end{proof}

\begin{lemma}[Weakening]
If $\Gamma \vdash \mathcal{J}$, $\Gamma \subseteq \Delta$ and $\Delta \vald$ then $\Delta \vdash \mathcal{J}$.
\end{lemma}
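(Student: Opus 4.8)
The plan is to proceed by induction on the derivation of $\Gamma \vdash \mathcal{J}$, keeping $\Gamma \subseteq \Delta$ and $\Delta \vald$ as ambient hypotheses that I allow to vary with the context at each inductive step. The case $\mathcal{J} = \mathrm{valid}$ is immediate, since the desired conclusion $\Delta \vald$ is itself one of the hypotheses. For the three variable rules $(\mathrm{var}_T)$, $(\mathrm{var}_P)$ and $(\mathrm{var}_E)$, the derivation records a declaration belonging to $\Gamma$; as $\Gamma \subseteq \Delta$ that declaration also lies in $\Delta$, so I re-apply the same variable rule, discharging its single premise with the given $\Delta \vald$.

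The non-binding rules — namely $(\bot)$, $(\supset)$, $(\mathrm{app}_T)$, $(\mathrm{app}_P)$, $(\mathrm{ref})$, $(\supset^*)$, $(\mathrm{univ})$, $(\mathrm{plus})$, $(\mathrm{minus})$, $(\mathrm{app}_E)$, together with the two conversion rules — all type their premises in the same context as their conclusion. For each of these I apply the induction hypothesis to every premise with the same $\Delta$ and reassemble the conclusion using the same rule. The side conditions $\varphi \simeq \psi$ on $(\mathrm{conv}_P)$ and $M \simeq M'$, $N \simeq N'$ on $(\mathrm{conv}_E)$ are statements about convertibility of expressions and do not mention the context, so they transfer verbatim.

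The substantive cases are the three binding rules, where a premise is typed in an extended context. For $(\lambda_T)$ I must weaken $\Gamma, x:A \vdash M:B$ to $\Delta, x:A \vdash M:B$: invoking $\alpha$-conversion I may assume $x \notin \dom \Delta$, so that $(\mathrm{ctx}_T)$ applied to $\Delta \vald$ yields $\Delta, x:A \vald$, and then the induction hypothesis, using $\Gamma, x:A \subseteq \Delta, x:A$, gives the result. The rule $(\lambda_P)$ is analogous, except that forming $\Delta, p:\varphi \vald$ via $(\mathrm{ctx}_P)$ requires $\Delta \vdash \varphi : \Omega$; I obtain this by first extracting $\Gamma \vdash \varphi : \Omega$ from the premise by Context Validity and inversion, and then weakening it by the induction hypothesis. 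For $(\triplelambda)$ the context is extended by three declarations $x:A$, $y:A$, $e : x =_A y$; after renaming $x, y, e$ to be fresh for $\Delta$, I build $\Delta, x:A, y:A, e : x =_A y \vald$ by successive applications of the context rules, where the two premises of $(\mathrm{ctx}_E)$ are the judgements $\Delta, x:A, y:A \vdash x : A$ and $\Delta, x:A, y:A \vdash y : A$ supplied by $(\mathrm{var}_T)$, and I then apply the induction hypothesis to the body and to the two side premises $\Gamma \vdash M : A \rightarrow B$ and $\Gamma \vdash N : A \rightarrow B$.

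I expect the only genuinely delicate part to be the bookkeeping in these binding cases: arranging freshness of the bound variables via $\alpha$-conversion and reconstructing the extended valid contexts $\Delta, \ldots \vald$ before the induction hypothesis can be invoked. Everything else is a mechanical re-application of the corresponding rule, and the whole argument is a routine structural induction once these context-extension steps are set up correctly.
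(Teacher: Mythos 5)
Your proof is correct and follows exactly the route the paper takes: the paper's proof is simply ``induction on derivations'', and your case analysis (variable rules via $\Gamma \subseteq \Delta$, non-binding rules by direct re-application, binding rules by $\alpha$-renaming and rebuilding the extended valid context, using Context Validity for $(\lambda_P)$) is the standard elaboration of that induction.
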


\begin{proof}
Induction on derivations.
\end{proof}

\begin{lemma}[Type Validity]
$ $
\begin{enumerate}
\item
If $\Gamma \vdash \delta : \varphi$ then $\Gamma \vdash \varphi : \Omega$.
\item
If $\Gamma \vdash P : M =_A N$ then $\Gamma \vdash M : A$ and $\Gamma \vdash N : A$.
\end{enumerate}
\end{lemma}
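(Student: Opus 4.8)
The plan is to prove both parts simultaneously by induction on the derivation of the typing judgement, since the two statements are mutually dependent: part (1) appeals to part (2) at the rules $(\mathrm{plus})$ and $(\mathrm{minus})$, while part (2) appeals to part (1) at $(\mathrm{univ})$. For each rule I inspect the last step of the derivation and check that the required typing of the endpoints (respectively, of the proposition) can be assembled from the induction hypotheses and the premises. A small amount of extra machinery is needed, which I collect first: a \emph{generation} fact for $\supset$, namely that $\Gamma \vdash \varphi \supset \psi : \Omega$ can only be derived by $(\supset)$, so it yields $\Gamma \vdash \varphi : \Omega$ and $\Gamma \vdash \psi : \Omega$ (this is immediate because terms have no conversion rule and the subjects of the term rules are syntactically disjoint); and the observation that a hypothesis recorded in a valid context is itself well-typed.

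Several cases are immediate because the desired conclusion appears verbatim among the premises: $(\mathrm{ref})$, whose premise is $\Gamma \vdash M : A$; $(\triplelambda)$, whose premises include $\Gamma \vdash M : A \to B$ and $\Gamma \vdash N : A \to B$; $(\mathrm{conv}_P)$, whose second premise is $\Gamma \vdash \psi : \Omega$; and $(\mathrm{conv}_E)$, whose premises include $\Gamma \vdash M' : A$ and $\Gamma \vdash N' : A$. The cases $(\mathrm{app}_P)$, $(\mathrm{univ})$, $(\mathrm{plus})$ and $(\mathrm{minus})$ all proceed by applying the induction hypothesis to obtain that an implication is well-typed and then invoking generation for $\supset$ to split off the two sides. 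For $(\supset^*)$ I apply part (2) to the two path premises to type $\varphi, \varphi', \psi, \psi' : \Omega$ and reassemble both endpoints with $(\supset)$; for $(\mathrm{app}_E)$ I apply part (2) to $\Gamma \vdash P : M =_{A \to B} M'$ to get $\Gamma \vdash M : A \to B$ and $\Gamma \vdash M' : A \to B$, combine these with the side premises $\Gamma \vdash N : A$, $\Gamma \vdash N' : A$ using $(\mathrm{app}_T)$, and so obtain $\Gamma \vdash MN : B$ and $\Gamma \vdash M'N' : B$.

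Two cases need genuine structural work. At the variable rules $(\mathrm{var}_P)$ and $(\mathrm{var}_E)$ the only premise is $\Gamma \vald$, so I must recover the typing of a hypothesis from validity of the context. Here I use Context Validity together with Weakening: the hypothesis $p : \varphi$ (respectively $e : M =_A N$) can only have entered $\Gamma$ through $(\mathrm{ctx}_P)$ (respectively $(\mathrm{ctx}_E)$), whose premise types $\varphi$ (respectively $M$ and $N$) over the prefix of $\Gamma$ up to that point, and weakening that judgement along $\Gamma$ gives the result. At $(\lambda_P)$, with premise $\Gamma, p : \varphi \vdash \delta : \psi$, I obtain $\Gamma \vdash \varphi : \Omega$ by inverting the derivation of $\Gamma, p : \varphi \vald$ (supplied by Context Validity, and necessarily produced by $(\mathrm{ctx}_P)$ from $\Gamma \vdash \varphi : \Omega$), and I obtain $\Gamma, p : \varphi \vdash \psi : \Omega$ from the induction hypothesis applied to the premise; $(\supset)$ then assembles $\Gamma \vdash \varphi \supset \psi : \Omega$.

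The one genuinely new ingredient, and the step I expect to be the main obstacle, occurs in this last case: the induction hypothesis delivers $\psi : \Omega$ only in the extended context $\Gamma, p : \varphi$, whereas I need it in $\Gamma$, so I must discard the proof hypothesis. This requires a \emph{strengthening} lemma for term judgements — that a proof or path hypothesis is irrelevant to the typing of a term. It holds because no proof or path variable ever occurs in a term and no term-typing rule consults a proof or path hypothesis beyond requiring the context to be valid; a routine induction on the derivation of $\Gamma, p : \varphi \vdash \psi : \Omega$, using that $\Gamma \vald$ as a prefix of a valid context, replays it as a derivation of $\Gamma \vdash \psi : \Omega$. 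With this strengthening lemma available, every case closes and the simultaneous induction is complete.
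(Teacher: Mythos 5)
Your proof is correct and follows essentially the same route as the paper's, which is simply a simultaneous induction on derivations using Context Validity (and Weakening) at the variable rules. The one place where you go beyond the paper's one-line proof --- the strengthening lemma needed at $(\lambda_P)$ to pass from $\Gamma, p : \varphi \vdash \psi : \Omega$ to $\Gamma \vdash \psi : \Omega$ --- is a genuine obligation that the paper elides, and your justification for it (proof and path variables cannot occur in terms, and the term rules consult the context only for term variables and validity) is the right one.
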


\begin{proof}
Induction on derivations.  The cases where $\delta$ or $P$ is a variable use Context Validity.
\end{proof}

\begin{lemma}[Generation]
$ $
\begin{enumerate}
\item
If $\Gamma \vdash x : A$ then $x : A \in \Gamma$.
\item
If $\Gamma \vdash \bot : A$ then $A \equiv \Omega$.
\item
If $\Gamma \vdash \varphi \supset \psi : A$ then $\Gamma \vdash \varphi : \Omega$, $\Gamma \vdash \psi : \Omega$ and $A \equiv \Omega$.
\item
If $\Gamma \vdash \lambda x:A.M : B$ then there exists $C$ such that $\Gamma, x : A \vdash M : C$ and $B \equiv A \rightarrow C$.
\item
If $\Gamma \vdash MN : A$ then there exists $B$ such that $\Gamma \vdash M : B \rightarrow A$ and $\Gamma \vdash N : B$.
\item
If $\Gamma \vdash p : \varphi$, then there exists $\psi$ such that $p : \psi \in \Gamma$ and $\varphi \simeq \psi$.
\item
If $\Gamma \vdash \lambda p:\varphi.\delta : \psi$, then there exists $\chi$ such that $\Gamma, p : \varphi \vdash \delta : \chi$ and $\psi \simeq (\varphi \supset \chi)$.
\item
If $\Gamma \vdash \delta \epsilon : \varphi$ then there exists $\psi$ such that $\Gamma \vdash \delta : \psi \supset \varphi$ and $\Gamma \vdash \epsilon : \psi$.
\item
If $\Gamma \vdash e : M =_A N$, then there exist $M'$, $N'$ such that $e : M' =_A N' \in \Gamma$ and $M \simeq M'$, $N \simeq N'$.
\item
If $\Gamma \vdash \reff{M} : N =_A P$, then we have $\Gamma \vdash M : A$ and $M \simeq N \simeq P$.
\item
If $\Gamma \vdash P \supset^* Q : \varphi =_A \psi$, then there exist $\varphi_1$, $\varphi_2$, $\psi_1$, $\psi_2$ such that
$\Gamma \vdash P : \varphi_1 =_\Omega \psi_1$, $\Gamma \vdash Q : \varphi_2 =_\Omega \psi_2$, $\varphi \simeq (\varphi_1 \supset \psi_1)$, $\psi \simeq (\varphi_2 \supset \psi_2)$, and $A \equiv \Omega$.
\item
If $\Gamma \vdash \univ{\varphi}{\psi}{\delta}{\epsilon} : \chi =_A \theta$, then we have $\Gamma \vdash \delta : \varphi \supset \psi$, $\Gamma \vdash \epsilon : \psi \supset \varphi$,
$\chi \simeq \varphi$, $\theta \simeq \psi$ and $A \equiv \Omega$.
\item
If $\Gamma \vdash \triplelambda e : x =_A y. P : M =_B N$ then there exists $C$ such that $\Gamma, x : A, y : A, e : x =_A y \vdash P : M x =_C N y$
and $B \equiv A \rightarrow C$.
\item
If $\Gamma \vdash P_{M M'} Q : N =_A N'$, then there exist $B$, $F$ and $G$ such that $\Gamma \vdash P : F =_{B \rightarrow A} G$, $\Gamma \vdash Q : M =_B M'$, $N \simeq F M$
and $N' \simeq G M'$.
\item
If $\Gamma \vdash P^+ : \varphi$, then there exist $\psi$, $\chi$ such that $\Gamma \vdash P : \psi =_\Omega \chi$ and $\varphi \simeq (\psi \supset \chi)$.
\item
If $\Gamma \vdash P^- : \varphi$, there exist $\psi$, $\chi$ such that $\Gamma \vdash P : \psi =_\Omega \chi$ and $\varphi \simeq (\chi \supset \psi)$.
\end{enumerate}
\end{lemma}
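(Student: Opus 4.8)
\section*{Proof proposal for the Generation Lemma}

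The plan is to prove each part by induction on the derivation of its hypothesis, analysing the last rule applied. The key observation is that the deduction system is \emph{syntax-directed in the subject}: every introduction rule has a conclusion whose subject (the expression to the left of the colon) has a fixed syntactic shape, and distinct introduction rules produce distinct shapes. Hence, for a hypothesis with a given subject $E$, the last rule of its derivation is forced to be either the unique introduction rule whose conclusion has a subject of the shape of $E$, or one of the two conversion rules $(\mathrm{conv}_P)$, $(\mathrm{conv}_E)$, both of which leave the subject unchanged. In the conversion case we apply the induction hypothesis for the same part to the premise, which carries the same subject $E$, and then compose convertibilities using the fact that $\simeq$ is an equivalence relation.

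Parts 1--5 concern \emph{terms}, and here there is no conversion rule at all. Thus each of $x$, $\bot$, $\varphi \supset \psi$, $\lambda x:A.M$ and $MN$ is the subject of the conclusion of exactly one rule --- $(\mathrm{var}_T)$, $(\bot)$, $(\supset)$, $(\lambda_T)$ and $(\mathrm{app}_T)$ respectively --- so the required conclusion, including the identity of the type (e.g.\ $A \equiv \Omega$ in parts 2 and 3, or $B \equiv A \to C$ in part 4), can be read off directly from that rule's premises with no appeal to the induction hypothesis.

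Parts 6--16 concern \emph{proofs} and \emph{paths}, where two kinds of case arise. If the last rule is the relevant introduction rule ($(\mathrm{var}_P)$, $(\lambda_P)$, $(\mathrm{app}_P)$ for proofs; $(\mathrm{ref})$, $(\supset^*)$, $(\mathrm{univ})$, $(\mathrm{plus})$, $(\mathrm{minus})$, $(\triplelambda)$, $(\mathrm{app}_E)$ for paths), then its premises supply exactly the asserted data, with the various instances of $\simeq$ holding by reflexivity. If instead the last rule is $(\mathrm{conv}_P)$ or $(\mathrm{conv}_E)$, we invoke the induction hypothesis on the premise and chain the newly introduced convertibility onto the one it supplies, using transitivity and symmetry of $\simeq$. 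It is important here that $(\mathrm{conv}_E)$ alters only the two endpoints of an equation and never its type $A$; this is precisely why the syntactic identities $A \equiv \Omega$ (parts 11, 12) and $B \equiv A \to C$ (part 13) survive conversions, while the endpoints are pinned down only up to $\simeq$.

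The fiddliest cases are the conversion steps for the path formers whose typing premises mention subterms of the \emph{endpoints} rather than of the subject, namely $(\triplelambda)$ (part 13) and $(\mathrm{app}_E)$ (part 14). There, after the induction hypothesis yields a derivation with the pre-conversion endpoints $M_0, N_0$, we must transport it to the post-conversion endpoints $M, N$; for part 13 this means producing $\Gamma, x:A, y:A, e:x =_A y \vdash P : M x =_C N y$ from $\Gamma, x:A, y:A, e:x =_A y \vdash P : M_0 x =_C N_0 y$. This is obtained by one further application of $(\mathrm{conv}_E)$, which needs two routine ingredients: that $\simeq$ is a congruence, so that $M_0 \simeq M$ gives $M_0 x \simeq M x$ (this follows from the compatible closure rules in the definition of $\rightarrow$, e.g.\ $MN \rightarrow M'N$ from $M \rightarrow M'$); and that the new endpoints are well typed in the extended context, which follows from the well-typedness premises of the original $(\mathrm{conv}_E)$ together with Weakening and $(\mathrm{app}_T)$. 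Apart from this bookkeeping the argument is entirely mechanical, and I expect no genuinely hard step: the whole lemma is a direct structural induction.
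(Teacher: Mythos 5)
Your proposal is correct and follows exactly the route the paper takes: the paper's proof is simply ``Induction on derivations,'' and your case analysis on the last rule (introduction rules read off directly, conversion rules handled by the induction hypothesis plus transitivity of $\simeq$, with the extra $(\mathrm{conv}_E)$ step and congruence of $\simeq$ for parts 13--14) is the standard elaboration of that one-line argument. No further comparison is needed.
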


\begin{proof}
Induction on derivations.
\end{proof}

\subsubsection{Substitutions}

\begin{definition}
Let $\Gamma$ and $\Delta$ be contexts.  A \emph{substitution from $\Delta$ to $\Gamma$}\footnote{These have also been called \emph{context morphisms}, for example in Hoffman \cite{Hofmann97syntaxand}.},
$\sigma : \Delta \Rightarrow \Gamma$,
is a substitution whose domain is $\dom \Gamma$ such that:
\begin{itemize}
\item
for every term variable $x : A \in \Gamma$, we have $\Delta \vdash \sigma(x) : A$;
\item
for every proof variable $p : \varphi \in \Gamma$, we have $\Delta \vdash \sigma(p) : \varphi [ \sigma ]$;
\item
for every path variable $e : M =_A N \in \Gamma$, we have $\Delta \vdash \sigma(e) : M [ \sigma ] =_A N [ \sigma ]$.
\end{itemize}
\end{definition}

\begin{lemma}[Well-Typed Substitution]
If $\Gamma \vdash \mathcal{J}$, $\sigma : \Delta \Rightarrow \Gamma$ and $\Delta \vald$, then $\Delta \vdash \mathcal{J} [\sigma]$.
\end{lemma}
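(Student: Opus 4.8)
The plan is to argue by induction on the derivation of $\Gamma \vdash \mathcal{J}$, showing in each case that $\Delta \vdash \mathcal{J}[\sigma]$ can be rebuilt from the induction hypotheses applied to the premises. A preliminary observation that removes most of the bookkeeping is that types contain no variables, so $A[\sigma] \equiv A$ for every type $A$; consequently substitution never disturbs the type annotations occurring in the rules, and the substituted judgement has exactly the shape demanded by the rule that is being reapplied.

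The leaf cases are where the definition of a context morphism does the work. For $(\mathrm{var}_T)$, $(\mathrm{var}_P)$ and $(\mathrm{var}_E)$ the desired conclusion is precisely the clause of $\sigma : \Delta \Rightarrow \Gamma$ for the relevant variable, so these hold by hypothesis (together with $\Delta \vald$). The context-formation rules are immediate as well, since when $\mathcal{J}$ is $\mathrm{valid}$ the substituted judgement is again $\mathrm{valid}$ and $\Delta \vald$ is assumed. The rules $(\bot)$, $(\supset)$, $(\mathrm{app}_T)$, $(\mathrm{app}_P)$, $(\mathrm{ref})$, $(\supset^*)$, $(\mathrm{univ})$, $(\mathrm{plus})$, $(\mathrm{minus})$ and $(\mathrm{app}_E)$ are all congruence-style: substitution commutes with the relevant term/proof/path former, the types are unchanged, and one simply applies the rule to the substituted premises furnished by the induction hypothesis.

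The main obstacle is the rules that bind variables, namely $(\lambda_T)$, $(\lambda_P)$ and $(\triplelambda)$. For these I would first establish a \emph{lifting lemma}: if $\sigma : \Delta \Rightarrow \Gamma$ and the bound variable is chosen (by $\alpha$-conversion) fresh for $\Delta$ and $\sigma$, then extending $\sigma$ by sending the new variable to itself yields a morphism from the correspondingly extended $\Delta$ into the extended $\Gamma$. The verification of this lemma is exactly where Weakening is used: each typing $\Delta \vdash \sigma(\cdot) : \cdots$ is weakened into the larger context, while the freshly bound variable is typed by the appropriate $(\mathrm{var})$ rule. To know the extended target context is valid in the proof- and path-binder cases, I apply the induction hypothesis to the subderivation of $\Gamma \vdash \varphi : \Omega$ (respectively of the two endpoint judgements $\Gamma \vdash M : A$ and $\Gamma \vdash N : A$), which is available by Context Validity. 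With the lifting lemma in hand the $\lambda$-cases follow by applying the induction hypothesis to the premise with the lifted substitution and reapplying $(\lambda_T)$ or $(\lambda_P)$. The $(\triplelambda)$ case is the most delicate, since it binds $x$, $y$ and $e$ simultaneously: the lifting is applied three times, and one must check that the endpoints $Mx$ and $Ny$ become $M[\sigma]\,x$ and $N[\sigma]\,y$, which holds because the lifted substitution fixes $x$ and $y$.

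Finally, the conversion rules $(\mathrm{conv}_P)$ and $(\mathrm{conv}_E)$ require the auxiliary fact that substitution respects conversion on terms: if $\varphi \simeq \psi$ then $\varphi[\sigma] \simeq \psi[\sigma]$. This follows from the observation recorded after the definition of reduction, that reduction on terms respects substitution, since a term contains no proof or path subexpressions and $\simeq$ is the congruence generated by $\rightarrow$. Given this, each conversion case goes through: the induction hypothesis supplies the substituted premises, the side condition $\varphi \simeq \psi$ lifts to $\varphi[\sigma] \simeq \psi[\sigma]$, and a final application of $(\mathrm{conv}_P)$, respectively $(\mathrm{conv}_E)$, yields the result.
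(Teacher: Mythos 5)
Your proposal is correct and follows the same route as the paper, whose proof is simply ``Induction on derivations''; your elaboration of the variable, binder, and conversion cases (including the lifting of $\sigma$ across binders via Weakening and the preservation of $\simeq$ on terms under substitution) fills in exactly the details the paper leaves implicit.
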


\begin{proof}
Induction on derivations.
\end{proof}

\begin{definition}
If $\rho, \sigma : \Delta \Rightarrow \Gamma$ and $\tau$ is a path substitution whose domain
is the term variables in $\dom \Gamma$, then we write
$\tau : \sigma = \rho : \Delta \Rightarrow \Gamma$ iff, for each variable $x : A \in \Gamma$, we have
$\Delta \vdash \tau(x) : \sigma(x) =_A \rho(x)$.
\end{definition}

\begin{lemma}[Path Substitution]
\label{lm:pathsub}
If $\tau : \sigma = \rho : \Delta \Rightarrow \Gamma$ and $\Gamma \vdash M : A$ and $\Delta \vald$,
then $\Delta \vdash M \{ \tau : \sigma = \rho \} : M [ \sigma ] =_A M [ \rho ]$.
\end{lemma}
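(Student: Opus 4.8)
The plan is to proceed by induction on the derivation of $\Gamma \vdash M : A$, following the clauses of the definition of path substitution case by case. Throughout, the two endpoints $M[\sigma]$ and $M[\rho]$ that appear in the desired type are well-typed in $\Delta$ by the Well-Typed Substitution lemma (since $\sigma, \rho : \Delta \Rightarrow \Gamma$), so in each case it remains only to assemble a path of the correct type using the appropriate path-formation rule and the induction hypothesis.

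The base cases are immediate. If $M$ is a term variable $x$, then $x : A \in \Gamma$, so $x$ lies in the domain of $\tau$ and $x \{ \tau : \sigma = \rho \} \equiv \tau(x)$; the requirement $\Delta \vdash \tau(x) : \sigma(x) =_A \rho(x)$ is exactly the hypothesis $\tau : \sigma = \rho : \Delta \Rightarrow \Gamma$. If $M \equiv \bot$, then $\bot \{ \tau \} \equiv \reff{\bot}$, and rule $(\mathrm{ref})$ gives $\Delta \vdash \reff{\bot} : \bot =_\Omega \bot$, as required since $\bot$ is closed. The two inductive cases for the remaining term-formers (other than $\lambda$) are routine applications of a single rule together with the induction hypothesis: for $\varphi \supset \psi$ we use $(\supset^*)$, and for an application $LL'$ we use $(\mathrm{app}_E)$, feeding it the two paths $L \{ \tau \}$ and $L' \{ \tau \}$ supplied by the induction hypothesis together with the side-typings $\Delta \vdash L'[\sigma] : A$ and $\Delta \vdash L'[\rho] : A$ obtained from Well-Typed Substitution. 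In each case the endpoints match on the nose because substitution commutes with the term-former, e.g. $(LL')[\sigma] \equiv L[\sigma]\,L'[\sigma]$.

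The one case requiring genuine work is $M \equiv \lambda y : A.L$, where $(\lambda y:A.L) \{ \tau : \sigma = \rho \} \equiv \triplelambda e : a =_A a' . L \{ \tau, y := e : a = a' \}$ for fresh $a$, $a'$, $e$. Here I would first form the extended data $\Delta' \equiv \Delta, a : A, a' : A, e : a =_A a'$, together with $\sigma' \equiv \sigma, y := a$, $\rho' \equiv \rho, y := a'$ and $\tau' \equiv \tau, y := e$, and verify that $\tau' : \sigma' = \rho' : \Delta' \Rightarrow \Gamma, y : A$ and that $\Delta' \vald$. This uses Weakening to lift the clauses of $\tau : \sigma = \rho$ from $\Delta$ to $\Delta'$, while the new clause for $y$ is read off directly from the declaration $e : a =_A a' \in \Delta'$. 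Applying the induction hypothesis to $\Gamma, y : A \vdash L : B$ then yields $\Delta' \vdash L \{ \tau' \} : L[\sigma'] =_B L[\rho']$.

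The main obstacle, and the reason $(\mathrm{conv}_E)$ is needed, is that the premise of $(\triplelambda)$ demands a path of type $(\lambda y:A.L[\sigma])\,a =_B (\lambda y:A.L[\rho])\,a'$, whereas the induction hypothesis delivers the type $L[\sigma'] =_B L[\rho']$, i.e. $L[\sigma][y:=a] =_B L[\rho][y:=a']$. These endpoints are not syntactically equal, but they are related by $\beta$-reduction: $(\lambda y:A.L[\sigma])\,a \rightarrow L[\sigma][y:=a]$ and likewise on the right, so $(\lambda y:A.L[\sigma])\,a \simeq L[\sigma']$ and $(\lambda y:A.L[\rho])\,a' \simeq L[\rho']$. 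I would therefore apply $(\mathrm{conv}_E)$ to retype $L \{ \tau' \}$ at the endpoints required by $(\triplelambda)$ — which is legitimate once we observe that $(\lambda y:A.L[\sigma])\,a$ and $(\lambda y:A.L[\rho])\,a'$ are well-typed in $\Delta'$ by Well-Typed Substitution and $(\mathrm{app}_T)$ — and then apply $(\triplelambda)$, whose remaining premises $\Delta \vdash (\lambda y:A.L)[\sigma] : A \rightarrow B$ and $\Delta \vdash (\lambda y:A.L)[\rho] : A \rightarrow B$ are again instances of Well-Typed Substitution. This yields exactly $\Delta \vdash (\lambda y:A.L) \{ \tau \} : (\lambda y:A.L)[\sigma] =_{A \rightarrow B} (\lambda y:A.L)[\rho]$, completing the induction.
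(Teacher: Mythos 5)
Your proof is correct and follows the same route as the paper, which simply states ``induction on derivations''; you have filled in the details the paper omits, including the only delicate point, namely the use of $(\mathrm{conv}_E)$ in the $\lambda$-case to identify the $\beta$-convertible endpoints $(\lambda y:A.L[\sigma])\,a$ and $L[\sigma][y:=a]$. No gaps.
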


\begin{proof}
Induction on derivations.
\end{proof}

\begin{proposition}[Subject Reduction]
If $\Gamma \vdash s : T$ and $s \twoheadrightarrow t$ then $\Gamma \vdash t : T$.
\end{proposition}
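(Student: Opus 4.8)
The plan is to first reduce to the one-step case. Since $\twoheadrightarrow$ is the reflexive--transitive closure of $\rightarrow$, a routine induction on the length of the reduction sequence reduces the claim to: if $\Gamma \vdash s : T$ and $s \rightarrow t$ then $\Gamma \vdash t : T$. I would prove this by induction on the derivation of $s \rightarrow t$, following the rules of Figure \ref{fig:reduction}. The two workhorses throughout are the Generation lemma, used to invert the typing judgement $\Gamma \vdash s : T$ and expose how $s$ was typed, and Type Validity, used to supply the well-typedness premises of the conversion rules $(\mathrm{conv}_P)$ and $(\mathrm{conv}_E)$. Note that since PHOML has no conversion rule on terms and no reduction on types, term typing is rigid: the type of a term is uniquely determined, so the term cases require no appeal to $\simeq$.

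For the base cases (the redexes), each reduction is handled by inverting the typing with Generation and then rebuilding the type of $t$, closing up with a conversion rule where the type is only determined up to $\simeq$. The $\beta$-redexes for terms and proofs, $(\lambda x{:}A.M)N$ and $(\lambda p{:}\varphi.\delta)\epsilon$, follow from Well-Typed Substitution applied to the substitution that is the identity on $\Gamma$ and sends the bound variable to the argument; the path $\beta$-redex $(\triplelambda e{:}x =_A y.P)_{MN}Q$ is analogous, using the substitution $x := M$, $y := N$, $e := Q$, whose well-typedness is exactly what the premises of $(\mathrm{app}_E)$ and $(\triplelambda)$ provide. The reductions $\reff{\varphi}^+, \reff{\varphi}^- \rightarrow \lambda p{:}\varphi.p$ and $\univ{\varphi}{\psi}{\delta}{\epsilon}^\pm \rightarrow \delta, \epsilon$ are immediate from Generation for $^+$/$^-$, $\reff{-}$ and $\mathsf{univ}$, followed by $(\mathrm{conv}_P)$ (using that $\simeq$ is a congruence for $\supset$). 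The genuinely interesting base case is $\reff{\lambda x{:}A.M}_{NN'}P \rightarrow M\{x := P : N = N'\}$: here I would invert with Generation for $(\mathrm{app}_E)$ and $(\mathrm{ref})$ to learn $\Gamma \vdash \lambda x{:}A.M : A \rightarrow C$ and $\Gamma \vdash P : N =_A N'$, then apply the Path Substitution lemma (Lemma \ref{lm:pathsub}) with the one-point path substitution $x := P$ over $\sigma = (x := N)$, $\rho = (x := N')$ to conclude $\Gamma \vdash M\{x := P : N = N'\} : M[x := N] =_C M[x := N']$. Since $(\lambda x{:}A.M)N \twoheadrightarrow M[x := N]$ and likewise on the right, these endpoints agree with those of $T$ up to $\simeq$, and $(\mathrm{conv}_E)$ finishes.

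The congruence cases are uniform: for a rule deriving $s \rightarrow t$ from a reduction of an immediate subexpression, I would invert $\Gamma \vdash s : T$ with Generation, apply the induction hypothesis to the reducing subexpression to retype it at the same type, reapply the corresponding typing rule, and insert $(\mathrm{conv}_P)$ or $(\mathrm{conv}_E)$ where the rule only pins the type down up to $\simeq$. The one case worth flagging is $\reff{M}_{NN'}P \rightarrow \reff{M'}_{NN'}P$ with $M \rightarrow M'$, where the subexpression reduces inside a $\reff{-}$: here the induction hypothesis retypes $M'$ at the same type as $M$, and since $M \simeq M'$ the endpoints stay convertible, so $(\mathrm{app}_E)$ and $(\mathrm{conv}_E)$ reassemble the judgement.

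I expect the main obstacle to be the three path reductions on $\supset^*$ whose right-hand sides are explicit $\mathsf{univ}$ terms, for example $\reff{\varphi} \supset^* \univ{\psi}{\chi}{\delta}{\epsilon} \rightarrow \univ{\varphi \supset \psi}{\varphi \supset \chi}{\lambda p{:}\varphi \supset \psi.\lambda q{:}\varphi.\delta(pq)}{\lambda p{:}\varphi \supset \chi.\lambda q{:}\varphi.\epsilon(pq)}$. These are not hard in principle --- one inverts with Generation for $\supset^*$, $\reff{-}$ and $\mathsf{univ}$ --- but one must then check by hand that each of the nested proof abstractions on the right type-checks under $(\lambda_P)$ and $(\mathrm{app}_P)$, and that the resulting $\mathsf{univ}$ term has endpoints convertible to those of $T$. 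This verification is purely mechanical but accounts for essentially all of the bookkeeping in the proof.
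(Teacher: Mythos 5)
Your proposal is correct and follows exactly the route the paper takes: reduce to the one-step case, then analyse the derivation of $s \rightarrow t$ using Generation to invert the typing, Well-Typed Substitution for the $\beta$-redexes, the Path Substitution Lemma for the $\reff{\lambda x{:}A.M}_{NN'}P$ redex, and the conversion rules (with Type Validity supplying their premises) to close up. The paper's own proof is just a one-sentence sketch citing these same lemmas, so your more detailed case-by-case account is a faithful expansion of it rather than a different argument.
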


\begin{proof}
It is sufficient to prove the case $s \rightarrow t$.  The proof is by a case analysis on $s \rightarrow t$, using the Generation,
Well-Typed Substitution and Path Substitution Lemmas.
\end{proof}

\subsubsection{Canonicity}

\begin{definition}[Canonical Object]
$ $
\begin{itemize}
\item
The \emph{canonical propositions}, are given by the grammar
$$ \theta ::= \bot \mid \theta \supset \theta $$
\item
A \emph{canonical proof} is one of the form $\lambda p : \varphi . \delta$.
\item
A \emph{canonical path} is one of the form $\reff{M}$, $\univ{\phi}{\psi}{\delta}{\epsilon}$ or $\triplelambda e : x =_A y.P$.
\end{itemize}
\end{definition}

\begin{lemma}
\label{lm:compat-beta}
Suppose $\varphi$ reduces to a canonical proposition $\theta$, and $\varphi \simeq \psi$.  Then $\psi$ reduces to $\theta$.
\end{lemma}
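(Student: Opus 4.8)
The plan is to reduce everything to two facts: that canonical propositions are normal forms, and that convertibility implies the existence of a common reduct, the latter being the Church-Rosser property which follows from the confluence Lemma~\ref{lm:diamond}. No typing hypotheses are needed, so the whole argument lives purely at the level of the reduction and conversion relations.

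First I would check that a canonical proposition $\theta$ is in normal form. The grammar of canonical propositions builds them from $\bot$ using $\supset$ alone, and inspecting the reduction rules on terms shows that the only rules that could apply to such a term are the two congruence rules for $\supset$, each of which requires one of its immediate subterms to reduce. Since $\bot$ is clearly a normal form, a short induction on the structure of $\theta$ shows that $\theta$ has no redex. In particular, $\theta \twoheadrightarrow K$ forces $K \equiv \theta$.

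Next I would extract a common reduct of $\varphi$ and $\psi$ from the hypothesis $\varphi \simeq \psi$. The relation $\simeq$ is the reflexive symmetric transitive closure of $\rightarrow$, and a routine induction along the chain of conversions, applying the diamond property of Lemma~\ref{lm:diamond} at each symmetric/transitive step, yields an expression $H$ with $\varphi \twoheadrightarrow H$ and $\psi \twoheadrightarrow H$. Now $\theta$ and $H$ are both reducts of $\varphi$, since $\varphi \twoheadrightarrow \theta$ by hypothesis and $\varphi \twoheadrightarrow H$ as just produced; applying confluence once more gives a $K$ with $\theta \twoheadrightarrow K$ and $H \twoheadrightarrow K$. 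Because $\theta$ is normal, $K \equiv \theta$, hence $H \twoheadrightarrow \theta$, and composing with $\psi \twoheadrightarrow H$ delivers $\psi \twoheadrightarrow \theta$, as required.

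I do not expect a genuine obstacle: the argument is essentially a double application of confluence together with the remark that canonical propositions cannot be reduced. The only points meriting a line of justification are the normal-form claim (an induction on $\theta$ driven by the shape of the term reduction rules) and the passage from $\simeq$ to a common reduct (the standard derivation of Church-Rosser from confluence), both of which are routine given Lemma~\ref{lm:diamond}.
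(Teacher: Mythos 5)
Your proof is correct and follows the same route as the paper, which simply observes that the result follows from confluence (Lemma~\ref{lm:diamond}) together with the fact that canonical propositions are normal forms; you have merely filled in the standard details (Church--Rosser from confluence, plus the induction showing $\theta$ has no redex).
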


\begin{proof}
This follows from the fact that $\rightarrow$ satisfies the diamond property, and every canonical proposition $\theta$ is a normal form.
\end{proof}

\subsubsection{Neutral Expressions}

\begin{definition}[Neutral]$ $
The \emph{neutral} terms, paths and proofs are given by the grammar
$$ \begin{array}{lrcl}
\text{Neutral term} & M_n & ::= & x \mid M_n N \\
\text{Neutral proof} & \delta_n & ::= & p \mid P_n^+ \mid P_n^- \mid \delta_n \epsilon \\
\text{Neutral path} & P_n & ::= & e \mid P_n \supset^* Q \mid Q \supset^* P_n \mid (P_n)_{MN} Q
\end{array} $$
\end{definition}

\section{Examples}

We present two examples illustrating the way that proofs and paths behave in PHOML.  In each case, we compare the example with the same construction performed in cubical type theory.

\subsection{Functions Respect Logical Equivalence}
\label{section:exampletwo}

As discussed in the introduction, every function of type $\Omega \rightarrow \Omega$ that can be constructed in PHOML must respect logical equivalence.  This fact can actually be proved in PHOML,
in the following sense: there exists a proof $\delta$ of
$$ f : \Omega \rightarrow \Omega, x : \Omega, y : \Omega, p : x \supset y, q : y \supset x \vdash \delta : f x \supset f y $$
and a proof of $f y \supset f x$ in the same context.  Together, these can be read as a proof of 'if $f : \Omega \rightarrow \Omega$ and $x$ and $y$ are logically equivalent, then $fx$ and $fy$ are logically equivalent'.

Specifically, take
$$ \delta \eqdef (\reff{f}_{xy} \univ{x}{y}{p}{q})^+ \enspace . $$

Note that this is not possible in Martin-L\"{o}f Type Theory.

In cubical type theory, we can construct a term $\delta$ such that
$$ f : \Prop \rightarrow \Prop, x : \Prop, y : \Prop, p : x.1 \rightarrow y.1, q : y.1 \rightarrow x.1 \vdash \delta : (f x).1 \rightarrow (f y).1 $$
In fact, we can go further and prove that equality of propositions is equal to logical equivalence.  That is, we can prove
$$ \Path{U}{(\Path{\Prop}{x}{y})}{((x.1 \rightarrow y.1) \times (y.1 \rightarrow x.1))} \enspace . $$

\subsection{Computation with Paths}

Let $\top \eqdef \bot \supset \bot$.
Using propositional extensionality, we can construct a path of type $\top = \top \supset \top$, and hence a proof of $\top \supset (\top \supset \top)$.
But which of the canonical proofs of $\top \supset (\top \supset \top)$ have we constructed?

We define
$$ \top := \bot \supset \bot, \quad \iota := \lambda p:\bot.p, \quad I := \lambda x:\Omega.x, \quad F := \lambda x:\Omega.\top \supset x, \quad H := \lambda h.h \top \enspace . $$

Let $\Gamma$ be the context
$$ \Gamma \eqdef x : \Omega, y : \Omega, e : x =_\Omega y \enspace . $$

Then we have
\begin{align*}
\Gamma & \vdash \lambda p:\top \supset x. e^+ (p \iota) & : & (\top \supset x) \supset y \\
\Gamma & \vdash \lambda m : y. \lambda n : \top. e^- m & : & y \supset (\top \supset x) \\
\Gamma & \vdash \univ*{\lambda p:\top \supset x. e^- m}{\lambda m:y. \lambda n:\top. e^- m} & : & (\top \supset x) =_\Omega y
\end{align*}

Let $P \equiv \univ*{\lambda p:\top \supset x. e^+ (p \iota)}{\lambda m:y. \lambda n:\top. e^- m}$.  Then

\begin{align}
\therefore & \vdash \triplelambda e:x =_\Omega y. P & : & F =_{\Omega \rightarrow \Omega} I \label{eq:llleP} \\
\therefore & \vdash (\reff{H})_{FI}(\triplelambda e:x =_\Omega y. P) & : & (\top \supset \top) =_\Omega \top \label{eq:llleP2} \\
\therefore & \vdash ((\reff{H})_{FI}(\triplelambda e:x =_\Omega y.P))^- & : & \top \supset (\top \supset \top) \label{eq:llleP3}
\end{align}

And now we compute:

\begin{align*}
& ((\reff{H})_{FI}(\triplelambda e:x =_\Omega y.P))^- \\
\twoheadrightarrow & ((h \top) \{ h := \triplelambda e : x =_\Omega y.P : F = I \})^- \\
\equiv & ((\triplelambda e : x =_\Omega y.P)_{\top \top} (\reff{\top}))^- \\
\rightarrow & (P [ x := \top, y := \top, e := \reff{\top} ])^- \\
\equiv & \univ*{\lambda p : \top \supset \top. \reff{\top}^+ (p \iota)}{\lambda m:\top. \lambda n:\top. \reff{\top}^- m}^- \\
\rightarrow & \lambda m : \top. \lambda n : \top. \reff{\top}^- m
\end{align*}

Therefore, given proofs $\delta, \epsilon : \top$, we have
$$ ((\reff{H})_{FI}(\triplelambda e:x =_\Omega y.P))^- \delta \epsilon \twoheadrightarrow \delta \enspace . $$

Thus, the construction gives a proof of $\top \supset (\top \supset \top)$ which, given two proofs of $\top$, selects the first.  We could have anticipated this:
consider the context $\Delta \eqdef X : \Omega, Y : \Omega, p : X$.  By
replacing in our example some occurrences of $\top$ with $X$ and others with $Y$, and replacing $\iota$ with $p$, we can obtain a path
$$ Y =_\Omega X \supset Y $$
and hence a proof of $Y \supset (X \supset Y)$.  By parametricity, any proof that we can construct in the context $\Delta$ of this proposition must return the left input.

\subsubsection{Comparison with Cubical Type Theory}
\label{section:cubical}

In cubical type theory, we say that a type $A$ is a \emph{proposition} iff any two terms of type $A$ are propositionally equal; that is, there exists a path between any two terms of type $A$.  Let
$$ \isProp{A} \eqdef \Pi x,y:A. \Path{A}{x}{y} $$
and let $\Prop$ be the type of all types in $U$ that are propositions:
$$ \Prop \eqdef \Sigma X : U. \isProp{X} \enspace . $$
Let $\bot$ be any type in the universe $U$ that is a proposition; that is, there exists a term of type $\isProp{\bot}$.  ($\bot$ may be the empty type, but we do not require this in what follows.)

Define
$$ \top := \bot \rightarrow \bot $$
Then there exists a term $\top_\Prop$ of type $\isProp{\top}$ (we omit the details).  Define
$$ I := \lambda X:\Prop.X.1, \quad F := \lambda X : \Prop.\top \rightarrow X.1, \quad H := \lambda h.h (\top , \top_\Prop) $$
Then we have
$$ \vdash \top : U \quad
\vdash I  : \Prop \rightarrow U \quad
\vdash F  : \Prop \rightarrow U \quad
\vdash H  : (\Prop \rightarrow U) \rightarrow U $$

From the fact that univalence is provable in cubical type theory \cite{cchm:cubical}, we can construct a term $Q$ such that
$$ \vdash Q : \Path{(\Prop \rightarrow U)}{I}{F} \enspace . $$
Hence we have
$$ \vdash \langle i \rangle H (Q i) : \Path{U}{HI}{HF} $$
which is definitionally equal to
$$ \vdash \langle i \rangle H (Q i) : \Path{U}{\top \rightarrow \top}{\top} $$
From this, we can apply transport to create a term $Q' : \top \rightarrow \top \rightarrow \top$.  Applying this to any terms $\delta, \epsilon : \top$ gives a term that
is definitionally equal to
$$ Q' \delta \epsilon = \mapid{\top}{\mapid{\top}{\delta}} $$
where $\mapid{}{}$ represents transport across the trivial path:
$$ \mapid{A}{t} \eqdef \comp^i \, A \, [] \, t \qquad (i \text{ does not occur in } A) \enspace . $$
(For the details of the calculation, see Appendix \ref{appendix:cubical}.)

In the version of cubical type theory given in \cite{bch:cubical}, we have $\mapid{X}{x}$ is definitionally equal to $x$, and therefore $Q' \delta \epsilon = \delta$, just as in PHOML.
This is no longer true in the version of cubical type theory given in \cite{cchm:cubical}.

\section{Computable Expressions}
\label{section:computable}

We now proceed with the proof of canonicity for PHOML.  Our proof follows the lines of the Girard-Tait reducibility method \cite{tait1967}: we define what it means to be a \emph{computable}
term (proof, path) of a given type (proposition, equation), and prove: (1) every typable expression is computable (2) every computable expression reduces to either a neutral or a canonical
expression.  In particular, every closed computable expression reduces to a canonical expression.

In this section, we use $E$, $F$, $S$ and $T$ as metavariables that range over expressions.  In each case, either $E$ and $F$ are terms and $S$ and $T$ are types; or $E$ and $F$ are
proofs and $S$ and $T$ are propositions; or $E$ and $F$ are paths and $S$ and $T$ are equations.

\begin{definition}[Computable Expression]
We define the relation $\models E : T$, read `$E$ is a computable expression of type $T$', as follows.
\begin{itemize}
\item
$\models \delta : \bot$ iff $\delta$ reduces to a neutral proof.
\item
For $\theta$ and $\theta'$ canonical propositions, $\models \delta : \theta \supset \theta'$ iff, for all $\epsilon$ such that $\models \epsilon : \theta$, we have $\models \delta \epsilon : \theta'$.
\item
If $\varphi$ reduces to the canonical proposition $\theta$, then $\models \delta : \varphi$ iff $\models \delta : \theta$.
\item
$\models P : \varphi =_\Omega \psi$ iff $\models P^+ : \varphi \supset \psi$ and $\models P^- : \psi \supset \varphi$.
\item
$\models P : M =_{A \rightarrow B} M'$ iff, for all $Q$, $N$, $N'$ such that $\models N : A$ and $\models N' : A$ and $\models Q : N =_A N'$, then we have $\models P_{NN'}Q : MN =_B M'N'$.
\item
$\models M : A$ iff $\models M \{\} : M =_A M$.
\end{itemize}
Note that the last three clauses define $\models M : A$ and $\models P : M =_A N$ simultaneously by recursion on $A$.
\end{definition}

\begin{definition}[Computable Substitution]
Let $\sigma$ be a substitution with domain $\dom \Gamma$.  We write $\models \sigma : \Gamma$ and say that
$\sigma$ is a \emph{computable} substitution on $\Gamma$ iff, for every entry $z : T$ in $\Gamma$, we have $\models \sigma(z) : T [ \sigma ]$.

We write $\models \tau : \rho = \sigma : \Gamma $, and say $\tau$ is a \emph{computable} path substitution between $\rho$ and $\sigma$, iff, for every term variable entry $x : A$ in $\Gamma$, we have $\models \tau(x) : \rho(x) =_A \sigma(x)$.
\end{definition}

\begin{lemma}[Conversion]
\label{lm:conv-compute}
If $\models E : S$ and $S \simeq T$ then $\models E : T$.
\end{lemma}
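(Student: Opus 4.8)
The plan is to prove the statement by splitting on which of the three kinds of expression $E$ is, so that the triple $(E,S,T)$ is either a term with two types, a proof with two propositions, or a path with two equations. First I would dispose of the case of types: since the reduction rules of Figure~\ref{fig:reduction} never rewrite a type (the grammar $A,B,C ::= \Omega \mid A \rightarrow B$ contains no redexes), convertibility of types degenerates to syntactic identity, so $S \simeq T$ forces $S \equiv T$. Hence when $E$ is a term, $\models E : S$ and $\models E : T$ are literally the same assertion and there is nothing to prove.

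For the proof case, let $E = \delta$, $S = \varphi$ and $T = \psi$ with $\varphi \simeq \psi$. The key observation is that, by inversion on its three defining clauses, $\models \delta : \varphi$ can hold only when $\varphi$ reduces to a canonical proposition: the first two clauses apply when $\varphi$ is already canonical (namely $\bot$ or $\theta \supset \theta'$), and the third applies precisely when $\varphi \twoheadrightarrow \theta$ for some canonical $\theta$. I would therefore extract from $\models \delta : \varphi$ a canonical $\theta$ with $\varphi \twoheadrightarrow \theta$ and $\models \delta : \theta$. Applying Lemma~\ref{lm:compat-beta} to $\varphi \twoheadrightarrow \theta$ and $\varphi \simeq \psi$ yields $\psi \twoheadrightarrow \theta$, so the third clause now applies to $\psi$ as well and gives $\models \delta : \psi$ iff $\models \delta : \theta$; the latter we already have.

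For the path case, let $E = P$ and read the convertibility of equations componentwise: because the index type cannot reduce, $S = (M =_A N)$ and $T = (M' =_A N')$ with $M \simeq M'$ and $N \simeq N'$. I would then induct on $A$. When $A \equiv \Omega$, unfolding the definition reduces the goal to $\models P^+ : \varphi' \supset \psi'$ and $\models P^- : \psi' \supset \varphi'$; since $\supset$ is a congruence for $\simeq$ (from the two $\supset$-reduction rules on terms) we have $\varphi \supset \psi \simeq \varphi' \supset \psi'$ and $\psi \supset \varphi \simeq \psi' \supset \varphi'$, so the already-established proof case applied to $P^+$ and $P^-$ closes this subcase. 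When $A \equiv B \rightarrow C$, write $S = (M =_{B \to C} M')$ and $T = (M_1 =_{B \to C} M_1')$ with $M \simeq M_1$ and $M' \simeq M_1'$; I would take arbitrary $Q$, $L$, $L'$ with $\models L : B$, $\models L' : B$ and $\models Q : L =_B L'$. The hypothesis gives $\models P_{LL'}Q : ML =_C M'L'$, and since application is a congruence for $\simeq$ we have $ML \simeq M_1 L$ and $M'L' \simeq M_1'L'$, so the equation $ML =_C M'L'$ is componentwise convertible to $M_1 L =_C M_1'L'$. The induction hypothesis at $C$, applied to the path $P_{LL'}Q$, then delivers $\models P_{LL'}Q : M_1 L =_C M_1'L'$, which is exactly what the definition of $\models P : M_1 =_{B \to C} M_1'$ requires.

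The only genuine content lies in the proof case: there I must know that proof-computability at a proposition is determined entirely by that proposition's canonical form, and this stability is precisely what Lemma~\ref{lm:compat-beta} provides. The remaining work is bookkeeping — the type case collapses because types are rigid, and the equation case is a routine induction on $A$ that merely reuses the proof case together with the fact that $\supset$ and application are congruences for $\simeq$. There is no circularity, since the proof case is settled independently before the path induction begins.
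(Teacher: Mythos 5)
Your proposal is correct and follows exactly the route the paper intends: its one-line proof ("follows easily from the definition and Lemma~\ref{lm:compat-beta}") is precisely your argument, with Lemma~\ref{lm:compat-beta} carrying the proposition case and the remaining cases handled by unfolding the definition of $\models$ (with the induction on the type $A$ for equations). Your elaboration, including the observation that types are rigid and that only function-position congruence of application is needed for the call-by-name relation, fills in the details faithfully.
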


\begin{proof}
This follows easily from the definition and Lemma \ref{lm:compat-beta}.
\end{proof}

\begin{lemma}[Expansion]
\label{lm:expansion}
If $\models F : T$ and $E \rightarrow F$ then $\models E : T$.
\end{lemma}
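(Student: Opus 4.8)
The plan is to prove the statement by induction following the recursive structure of the definition of $\models$, treating the three kinds of expression (proofs, paths, terms) in the order in which their computability predicates are defined. Since proof computability $\models \delta : \varphi$ is defined purely in terms of the computability of other proofs, I would dispose of proofs first; path and term computability are then defined by a mutual recursion on the type $A$, with the base case $A \equiv \Omega$ for paths depending on proofs, so these are treated second. In each case I exploit the fact that the relevant reduction rules are congruences: $E \to F$ propagates to $E\epsilon \to F\epsilon$, to $E^+ \to F^+$ and $E^- \to F^-$, and to $E_{NN'}Q \to F_{NN'}Q$.

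For proofs, suppose $E \to F$ and $\models F : \varphi$; I argue by induction on the canonical form $\theta$ to which $\varphi$ reduces (such a $\theta$ exists whenever $\models F : \varphi$ holds). If $\theta \equiv \bot$ then $F$ reduces to a neutral proof, and since $E \to F$ so does $E$, giving $\models E : \bot$. If $\theta \equiv \theta_1 \supset \theta_2$, then for every $\epsilon$ with $\models \epsilon : \theta_1$ we have $E\epsilon \to F\epsilon$ and $\models F\epsilon : \theta_2$, so $\models E\epsilon : \theta_2$ by the induction hypothesis at the smaller proposition $\theta_2$; hence $\models E : \theta_1 \supset \theta_2$. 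The case of a general $\varphi$ reducing to canonical $\theta$ is immediate, since $\models \delta : \varphi$ means $\models \delta : \theta$ for the same $\theta$, independently of the proof.

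For paths I induct on $A$. If $\models F : \varphi =_\Omega \psi$, then $E^+ \to F^+$ and $E^- \to F^-$, and applying the already-established proof case to $\models F^+ : \varphi \supset \psi$ and $\models F^- : \psi \supset \varphi$ yields $\models E^+ : \varphi \supset \psi$ and $\models E^- : \psi \supset \varphi$, hence $\models E : \varphi =_\Omega \psi$. If $\models F : M =_{A \to B} M'$, I take arbitrary $N, N', Q$ satisfying the computability hypotheses, use $E_{NN'}Q \to F_{NN'}Q$ together with $\models F_{NN'}Q : MN =_B M'N'$, and conclude $\models E_{NN'}Q : MN =_B M'N'$ by the induction hypothesis for the strictly smaller equation at type $B$.

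The main obstacle is the term case, where the endpoints of the equation change under the reduction. To deduce $\models E : A$ from $E \to F$ and $\models F : A$, I must pass from $\models F\{\} : F =_A F$ to $\models E\{\} : E =_A E$. First, Lemma \ref{lm:resp-sub} applied to the empty path substitution gives $E\{\} \to F\{\}$. Next, the path case of the present lemma, applied at type $A$ to the equation $F =_A F$, yields $\models E\{\} : F =_A F$. Finally, since $E \to F$ implies $E \simeq F$ and therefore $F =_A F \simeq E =_A E$, the Conversion Lemma \ref{lm:conv-compute} converts this to $\models E\{\} : E =_A E$, which is exactly $\models E : A$. It remains only to check that the whole induction is well-founded: establishing the path case before the term case within each step is legitimate because the path case at $A$ appeals only to proof computability (when $A \equiv \Omega$) or to the path case at the strictly smaller codomain type (when $A$ is a function type), and never to the term case at $A$.
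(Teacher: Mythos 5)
Your proposal is correct and takes essentially the same route as the paper: the paper's proof is the one-line ``an easy induction, using the fact that call-by-name reduction respects path substitution (Lemma~\ref{lm:resp-sub})'', and your elaboration supplies exactly the intended structure --- congruence of reduction for the proof and path clauses, and Lemma~\ref{lm:resp-sub} (with the empty path substitution, giving $E\{\} \rightarrow F\{\}$) plus Conversion for the term clause, where the endpoints of the equation change. The well-foundedness check at the end is a sensible addition but not a departure from the paper's argument.
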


\begin{proof}
An easy induction, using the fact that call-by-name reduction respects path substitution (Lemma \ref{lm:resp-sub}).
\end{proof}

\begin{lemma}[Reduction]
\label{lm:reduction}
If $\models E : T$ and $E \rightarrow F$ then $\models F : T$.
\end{lemma}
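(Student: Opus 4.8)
The plan is to prove the statement separately for proofs, then jointly for terms and paths, each time by an induction that follows the recursive structure of the definition of $\models E : T$ and, in every case, pushes the given reduction $E \to F$ through the appropriate congruence rule of Figure \ref{fig:reduction} before invoking the induction hypothesis. The first three clauses of the definition mention only proofs and propositions, so $\models \delta : \varphi$ is self-contained; I would dispatch it first, by induction on the canonical proposition to which $\varphi$ reduces. I would then treat terms and paths together by induction on the type $A$, ordering the argument so that the statement for $\models M : A$ sits just above the statement for paths $\models P : M =_A N$ at the same $A$: clause five refers only to strictly smaller types, and clause four refers back to the proof result already established.

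The routine cases all have the same shape. For $\models \delta : \theta \supset \theta'$ I take an arbitrary $\epsilon$ with $\models \epsilon : \theta$; from $\delta \to \delta'$ the congruence rule gives $\delta\epsilon \to \delta'\epsilon$, and the induction hypothesis at $\theta'$ yields $\models \delta'\epsilon : \theta'$, whence $\models \delta' : \theta \supset \theta'$. The non-canonical clause $\models \delta : \varphi$ with $\varphi \twoheadrightarrow \theta$ is immediate from the induction hypothesis at $\theta$. For $\models P : \varphi =_\Omega \psi$ I use that $P \to P'$ gives $P^+ \to P'^+$ and $P^- \to P'^-$, then apply the proof result at the propositions $\varphi \supset \psi$ and $\psi \supset \varphi$. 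For $\models P : M =_{A \to B} M'$ I use $P_{NN'} Q \to P'_{NN'} Q$ and the induction hypothesis at the strictly smaller equation $MN =_B M'N'$.

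Two cases need more than a congruence rule. In the base case $\models \delta : \bot$, computability means $\delta \twoheadrightarrow \delta_n$ for some neutral proof $\delta_n$; given $\delta \to \delta'$, I apply Confluence (Lemma \ref{lm:diamond}) to $\delta \twoheadrightarrow \delta'$ and $\delta \twoheadrightarrow \delta_n$ to obtain a common reduct $H$, and then use that neutrality is preserved under reduction — by inspection of the rules, none of the redexes of a neutral term, proof or path can occur at its head, so it can only reduce to another neutral expression — whence $H$ is neutral and $\delta' \twoheadrightarrow H$ witnesses $\models \delta' : \bot$. The term case $\models M : A$, which unfolds to $\models M\{\} : M =_A M$, is where I expect the real subtlety: from $M \to M'$, Lemma \ref{lm:resp-sub} applied with the empty path substitution gives $M\{\} \to M'\{\}$, and the induction hypothesis for paths at $=_A$ gives $\models M'\{\} : M =_A M$ — but with the \emph{old} endpoints $M, M$. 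To obtain the required $\models M'\{\} : M' =_A M'$ I then change endpoints along $M \simeq M'$ by the Conversion Lemma (Lemma \ref{lm:conv-compute}), using $M =_A M \simeq M' =_A M'$.

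The main obstacle is exactly this endpoint mismatch: since term computability is witnessed by a path whose endpoints are the term itself, reducing the term forces a simultaneous change in the path's type, and only the Conversion Lemma reconciles the two. A secondary point requiring care is well-foundedness of the induction, because the term clause and the non-canonical proposition clause do not decrease the type; I would handle these by making the induction lexicographic, with the term-to-path step and the non-canonical-to-canonical step each strictly decreasing a secondary component.
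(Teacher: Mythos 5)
Your proposal is correct and follows essentially the same route as the paper, which proves this lemma by the same induction over the definition of computability and cites Confluence precisely where you use it, namely to show in the base case that $\delta'$ still reaches a neutral (hence, since neutrality is preserved by reduction, the required) form. The additional details you supply --- pushing $M \to M'$ through the empty path substitution via Lemma \ref{lm:resp-sub} and then repairing the endpoints with Conversion (Lemma \ref{lm:conv-compute}), and organising the induction lexicographically so that paths at $A$ precede terms at $A$ --- are exactly the steps the paper leaves implicit.
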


\begin{proof}
An easy induction, using the fact that call-by-name reduction is confluent (Lemma \ref{lm:diamond}).
\end{proof}

\begin{definition}
We introduce a closed term $c_A$ for every type $A$ such that $\models c_A : A$.
\begin{align*}
c_\Omega & \eqdef \bot \\
c_{A \rightarrow B} & \eqdef \lambda x:A.c_B
\end{align*}
\end{definition}

\begin{lemma}
$\models c_A : A$
\end{lemma}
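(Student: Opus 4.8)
The plan is to prove $\models c_A : A$ by induction on the structure of the type $A$, unfolding the definition of computability at each step and appealing to the Expansion and Conversion lemmas. At every stage the term clause $\models M : A \iff\;\models M\{\} : M =_A M$ reduces the goal to a statement about a computable reflexivity path, so the real work happens on the path side of the mutually recursive definition.

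For the base case $A \equiv \Omega$, I unfold $\models c_\Omega : \Omega$ to $\models \bot\{\} : \bot =_\Omega \bot$ and note that $\bot\{\} \equiv \reff{\bot}$. By the clause for $\models P : \varphi =_\Omega \psi$ this reduces to showing $\models \reff{\bot}^+ : \bot \supset \bot$ and $\models \reff{\bot}^- : \bot \supset \bot$. Since $\reff{\bot}^+ \rightarrow \lambda p:\bot.p$ and $\reff{\bot}^- \rightarrow \lambda p:\bot.p$, Expansion tells me it suffices to prove $\models \lambda p:\bot.p : \bot \supset \bot$. Unfolding the clause for an implication between canonical propositions, I must show $\models (\lambda p:\bot.p)\epsilon : \bot$ whenever $\models \epsilon : \bot$; but $(\lambda p:\bot.p)\epsilon \rightarrow \epsilon$, so this again follows from Expansion.

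For the inductive step $A \equiv B \rightarrow C$, I have $c_{B \rightarrow C} \equiv \lambda x:B.c_C$ with $c_C$ closed. Unfolding $\models \lambda x:B.c_C : B \rightarrow C$ gives the goal $\models (\lambda x:B.c_C)\{\} : (\lambda x:B.c_C) =_{B \rightarrow C} (\lambda x:B.c_C)$. Because $c_C$ is closed, path substitution on the bound variable is vacuous, so $(\lambda x:B.c_C)\{\} \equiv \triplelambda e : x' =_B x''. c_C\{\}$, where $c_C\{\}$ is itself a closed path. By the clause for computable paths at an arrow type, I must show $\models (\triplelambda e:x' =_B x''.c_C\{\})_{NN'}Q : (\lambda x:B.c_C)N =_C (\lambda x:B.c_C)N'$ for all computable $N$, $N'$, $Q$. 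The left-hand path head-reduces, via the $\triplelambda$-rule and closedness of $c_C\{\}$, to $c_C\{\}$, while the endpoints satisfy $(\lambda x:B.c_C)N \twoheadrightarrow c_C$ and $(\lambda x:B.c_C)N' \twoheadrightarrow c_C$. The induction hypothesis $\models c_C : C$ is by definition exactly $\models c_C\{\} : c_C =_C c_C$, so Expansion yields $\models (\triplelambda e:x'=_B x''.c_C\{\})_{NN'}Q : c_C =_C c_C$, and Conversion then upgrades the type componentwise to $(\lambda x:B.c_C)N =_C (\lambda x:B.c_C)N'$.

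The only point requiring genuine care is the bookkeeping of path substitution on closed terms: I expect to need the auxiliary facts that $L\{x := P : M = N\} \equiv L\{\}$ whenever $x$ is not free in $L$, and that $L\{\}$ is closed whenever $L$ is, each proved by a routine induction on $L$. Granted these, the head-reductions above guarantee that every unfolding of computability lands precisely on a reflexivity path to which Expansion and Conversion apply, and no difficulty remains beyond tracking these reductions through the mutually recursive clauses defining $\models M : A$ and $\models P : M =_A N$.
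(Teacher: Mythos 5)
Your proof is correct and follows the same route the paper intends: the paper's proof is just ``an easy induction on $A$,'' and your argument is a faithful elaboration of that induction, with the base case handled via $\bot\{\} \equiv \reff{\bot}$ and Expansion, and the arrow case via the $\triplelambda$-redex collapsing to the closed path $c_C\{\}$ followed by Expansion and Conversion. The auxiliary facts you flag (vacuous path substitution on closed terms, closedness of $c_C\{\}$) are indeed the only bookkeeping needed, and both hold by routine induction.
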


\begin{proof}
An easy induction on $A$.
\end{proof}

\begin{lemma}[Weak Normalization]
\label{lm:neutral-canon}
$ $
\begin{enumerate}
\item
If $\models \delta : \phi$ then $\delta$ reduces to either a neutral proof or canonical proof.
\item
If $\models P : M =_A N$ then $P$ reduces either to a neutral path or canonical path.
\item
If $\models M : A$ then $M$ reduces either to a canonical proposition or a $\lambda$-term.
\end{enumerate}
\end{lemma}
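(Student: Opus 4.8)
The plan is to prove the three statements by a single well-founded induction, together with an auxiliary fact that supplies the inhabitants needed to probe computability at higher types. Concretely, I would first establish, by induction on the canonical proposition $\theta$, that \emph{every neutral proof is computable}: for $\theta \equiv \bot$ a neutral proof reduces to itself, and for $\theta \equiv \theta_1 \supset \theta_2$ the application $\delta_n \epsilon$ of a neutral proof to any computable $\epsilon$ is again neutral, so the induction hypothesis applies; in particular every proof variable is computable. The three parts of the lemma are then proved following the definition of $\models$: part (1) by induction on the canonical proposition to which $\phi$ reduces, and parts (2) and (3) by simultaneous induction on the type $A$, proving the path statement before the term statement at each level (since computability of $M$ at $A$ is routed through the path $M\{\}$). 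Throughout I would use Reduction (Lemma~\ref{lm:reduction}), Expansion (Lemma~\ref{lm:expansion}), Conversion (Lemma~\ref{lm:conv-compute}), the computable closed terms $c_A$, and the compatibility of reduction with path substitution (Lemma~\ref{lm:resp-sub}).

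For part (1), the base case $\phi \twoheadrightarrow \bot$ is immediate, since $\models \delta : \bot$ is by definition the statement that $\delta$ reduces to a neutral proof. For the step $\phi \twoheadrightarrow \theta_1 \supset \theta_2$, I would feed $\delta$ a fresh proof variable $p$, computable of type $\theta_1$ by the auxiliary fact; then $\models \delta p : \theta_2$, so $\delta p$ reduces to a neutral or canonical proof by the induction hypothesis. The conclusion for $\delta$ follows from an analysis of call-by-name reduction: in a proof application only the function position reduces, so along any reduction of $\delta p$ either a top-level $\beta$-redex eventually fires --- which can happen only once $\delta$ has reduced to a $\lambda$-abstraction, i.e.\ a canonical proof --- or no such redex fires and the normal form has the shape $\delta' p$, which is neutral exactly when $\delta'$ is. In both cases $\delta$ reduces to a neutral or canonical proof.

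For part (2) at $A \equiv \Omega$, the hypothesis unfolds to $\models P^+ : \varphi \supset \psi$ and $\models P^- : \psi \supset \varphi$, so by part (1) the proof $P^+$ reduces to a neutral or canonical proof; lifting this through the elimination $^+$ (whose only top-level reductions are $\reff{\varphi}^+ \rightarrow \lambda p:\varphi.p$ and $\univ{\varphi}{\psi}{\delta}{\epsilon}^+ \rightarrow \delta$, both of which require the head path to have already become a canonical path) shows that $P$ itself reduces to a neutral or canonical path. For $A \equiv B \to C$ I would instantiate the defining clause with $N \equiv N' \equiv c_B$ and $Q \equiv c_B \{\}$, all computable; the resulting path $P_{c_B c_B}(c_B\{\})$ is computable of an equation at $C$ and hence, by the induction hypothesis for part (2) at $C$, reduces to a neutral or canonical path. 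The same call-by-name bookkeeping as before --- path application reduces its head until a top-level rule fires, and a top-level rule fires only when the head is $\triplelambda\ldots$ or $\reff{\ldots}$ --- transports this information back to $P$.

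Part (3) is where I expect the real difficulty. For $A \equiv \Omega$ the statement is almost immediate: $\models M : \Omega$ unfolds to the computability of $(M\{\})^+$ at the proposition $M \supset M$, which is only meaningful once $M \supset M$, and hence $M$, reduces to a canonical proposition. For $A \equiv B \to C$ I would use that $\models M\{\} : M =_{B \to C} M$, so by part (2) the path $M\{\}$ reduces to a canonical path; since $M\{\}$ contains no path variables it can never be neutral, and one checks it never reduces to a $\mathsf{univ}$-path, so it must reduce to a $\reff{\ldots}$ or to a $\triplelambda\ldots$. The crux is to read off the shape of $M$ from the normal form of $M\{\}$ and of the eliminated path $(M\{\})_{c_B c_B}(c_B\{\})$: if $M$ reduced to a neutral term or to a non-canonical implication $\varphi \supset \psi$, then this eliminated path would reduce to a form whose head is a reflexivity path applied as a function --- stuck, but not neutral --- contradicting its computability, whereas $M \twoheadrightarrow \lambda y.L$ forces $M\{\} \twoheadrightarrow \triplelambda\ldots$. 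Making this correspondence precise, and in particular deducing weak normalization of $M$ itself from that of $M\{\}$ and its eliminations, is the main obstacle; I would carry it out using Lemma~\ref{lm:resp-sub} to transport reductions of $M$ onto $M\{\}$ and confluence (Lemma~\ref{lm:diamond}) to reconcile the additional reductions that $M\{\}$ admits.
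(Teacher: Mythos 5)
Your treatment of parts (1) and (2) is essentially the paper's argument. The auxiliary fact that neutral proofs are computable (needed to feed $\delta$ a variable $p$ with $\models p : \theta_1$) is indeed required and is stated in the paper as a separate lemma (Lemma~\ref{lm:neutral-proof}, proved by the same induction on $\theta$ you describe); making it explicit up front, or folding it into a simultaneous induction, is the right move. Your analysis of how a reduction of $\delta p$, $P^+$ or $P_{c_B c_B}(c_B\{\})$ to a neutral or canonical form forces the head $\delta$ or $P$ itself into neutral or canonical form is exactly the paper's call-by-name bookkeeping.

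Part (3) at arrow types is where you diverge, and where your proposal has a genuine gap that you yourself flag: you try to read off the shape of $M$ from the canonical form of the \emph{path} $M\{\}$, which requires a reflection lemma transporting reductions of $M\{\}$ back to reductions of $M$. Lemma~\ref{lm:resp-sub} only goes the other way ($M \rightarrow N$ gives $M\{\tau\} \rightarrow N\{\tau\}$), and the converse is delicate: $M\{\}$ admits steps (e.g.\ the $\triplelambda$/$\reff{-}$ application rules and the rule $\reff{M}_{NN'}P \rightarrow \reff{M'}_{NN'}P$) that do not line up one-for-one with steps of $M$, and confluence alone does not let you conclude that $M$ reaches a $\lambda$-term just because $M\{\}$ reaches a $\triplelambda$-path. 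The paper sidesteps this entirely by staying on the \emph{term} side: writing $A \equiv A_1 \rightarrow \cdots \rightarrow A_n \rightarrow \Omega$, it applies $M\{\}$ to $c_{A_1}\{\}, \ldots, c_{A_n}\{\}$ to obtain a computable path of type $M c_{A_1} \cdots c_{A_n} =_\Omega M c_{A_1} \cdots c_{A_n}$; computability of any proof or path at a proposition forces that proposition to reduce to a canonical one, so the term $M c_{A_1} \cdots c_{A_n}$ reduces to a canonical proposition. Since call-by-name reduction of a term application only ever reduces the head, and a canonical proposition is not an application, this reduction must first bring $M$ to a $\lambda$-expression (or, when $n = 0$, to the canonical proposition itself). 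Replacing your analysis of $M\{\}$ with this one-line argument closes the gap; the rest of your proof stands.
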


\begin{proof}
We prove by induction on the canonical proposition $\theta$ that, if $\models \delta : \theta$, then $\delta$ reduces to a neutral proof or
a canonical proof of $\theta$.

If $\models \delta : \bot$ then $\delta$ reduces to a neutral proof.  Now, suppose $\models \delta : \theta \supset \theta'$.  Then $\models \delta p :
\theta'$, so $\delta p$ reduces to either a neutral proof or canonical proof by the induction hypothesis.  This reduction must proceed either by
reducing $\delta$ to a neutral proof, or reducing $\delta$ to a $\lambda$-proof then $\beta$-reducing.

We then prove by induction on the type $A$ that, if $\models P : M =_A N$, then $P$ reduces to a neutral path or a canonical path. The two cases are straightforward.

Now, suppose $\models M : A$, i.e. $\models M \{\} : M =_A M$.  Let $A \equiv A_1 \rightarrow \cdots \rightarrow A_n \rightarrow \Omega$.  Then
$$ \models M \{\}_{c_{A_1} c_{A_1}} c_{A_1} \{\}_{c_{A_2} c_{A_2}} \cdots c_{A_n} \{\}
: M c_{A_1} \cdots c_{A_n} =_\Omega M c_{A_1} \cdots c_{A_n} \enspace . $$
Therefore, $M c_{A_1} \cdots c_{A_n}$ reduces to a canonical proposition.  The reduction must consist either in reducing $M$ to a canonical proposition (if $n = 0$),
or reducing $M$ to a $\lambda$-expression then performing a $\beta$-reduction.
\end{proof}

\begin{lemma}
\label{lm:pre-ref-compute}
If $\models M : A \rightarrow B$ then $M$ reduces to a $\lambda$-expression.
\end{lemma}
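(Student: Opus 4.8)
The plan is to start from Weak Normalization (Lemma~\ref{lm:neutral-canon}, part~3), which already tells us that $M$ reduces either to a canonical proposition or to a $\lambda$-expression. It therefore suffices to rule out the first possibility: I must show that a computable term of arrow type cannot reduce to a canonical proposition.

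So suppose for contradiction that $M \twoheadrightarrow \theta$ for some canonical proposition $\theta$. The idea is to feed $M$ the computable canonical inhabitant $c_A$ of $A$. Unfolding $\models M : A \to B$ gives $\models M\{\} : M =_{A \to B} M$, and since $\models c_A : A$ (and hence $\models c_A\{\} : c_A =_A c_A$), the arrow clause of the definition of computability, instantiated with $N \equiv N' \equiv c_A$ and $Q \equiv c_A\{\}$, yields $\models M\{\}_{c_A c_A}\, c_A\{\} : M c_A =_B M c_A$. The key syntactic observation is that $M\{\}_{c_A c_A}\, c_A\{\}$ is literally $(M c_A)\{\}$, by the application clause of path substitution applied to the empty list; hence $\models M c_A : B$.

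Now I would derive the contradiction. Since $M \twoheadrightarrow \theta$, the congruence rule for application gives $M c_A \twoheadrightarrow \theta\, c_A$. But $\theta$ and $c_A$ are normal and $\theta$ is not a $\lambda$-expression, so $\theta\, c_A$ is a stuck application, hence itself a normal form that is neither a canonical proposition nor a $\lambda$-expression. On the other hand, Weak Normalization applied to $\models M c_A : B$ says $M c_A$ reduces to a canonical proposition or a $\lambda$-expression, each of which is a normal form. By Confluence (Lemma~\ref{lm:diamond}), two reducts of $M c_A$ share a common reduct; as $\theta\, c_A$ is normal, that common reduct is $\theta\, c_A$, forcing the canonical proposition or $\lambda$-expression to be syntactically equal to $\theta\, c_A$ --- impossible, since the former are of the wrong shape. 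This rules out $M \twoheadrightarrow \theta$, so by Lemma~\ref{lm:neutral-canon} the term $M$ reduces to a $\lambda$-expression.

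The only genuinely delicate points are the identity $M\{\}_{c_A c_A}\, c_A\{\} \equiv (M c_A)\{\}$, which is exactly what lets us promote path-level computability of $M\{\}_{c_A c_A}\, c_A\{\}$ back to $\models M c_A : B$, and the observation that $\theta\, c_A$ is a stuck normal form of the wrong shape. Everything else is a routine appeal to Confluence and the normalization result already in hand.
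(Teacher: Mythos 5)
Your proof is correct, and the crucial step is the same one the paper uses: apply $M$ to the canonical computable inhabitant $c_A$, and use the syntactic identity $(M c_A)\{\} \equiv M\{\}_{c_A c_A}\, c_A\{\}$ to convert path-level computability back into $\models M c_A : B$. Where you diverge is in the finishing move. The paper's proof (``similar to the last paragraph of the previous proof'') iterates the application all the way down to $\Omega$, obtaining that $M c_{A_1} \cdots c_{A_n}$ reduces to a canonical proposition, and then reads off directly from the structure of call-by-name reduction that such a sequence must first reduce the head $M$ to a $\lambda$-abstraction before the outermost redex can be contracted. You instead apply only one argument, invoke Weak Normalization (Lemma~\ref{lm:neutral-canon}.3) at the type $B$ as a black box, and close by contradiction via Confluence (Lemma~\ref{lm:diamond}): if $M$ reduced to a canonical proposition $\theta$, then $M c_A$ would reduce to the stuck normal form $\theta\, c_A$, incompatible with its other normal-form reduct. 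Both arguments are sound. The paper's version avoids any appeal to confluence here but silently relies on the structural fact that a call-by-name reduction of an application to a non-application must factor through reducing the head to a $\lambda$ --- a fact of exactly the same flavour as your observation that $\theta\, c_A$ is stuck. Your version uses heavier ambient machinery (confluence plus the already-established normalization dichotomy) but each individual step is more mechanical, and it reuses Lemma~\ref{lm:neutral-canon} rather than re-running its argument; the one thing to be careful to note, which you implicitly use, is that $\lambda$-terms are themselves normal forms under this call-by-name relation (there is no $\xi$-rule), so the common reduct really must coincide with the canonical or $\lambda$ reduct.
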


\begin{proof}
Similar to the last paragraph of the previous proof.
\end{proof}

\begin{lemma}
\label{lm:ref-compute-Omega}
For any term $\varphi$ that reduces to a canonical proposition, we have $\models \reff{\varphi} : \varphi =_\Omega \varphi$.
\end{lemma}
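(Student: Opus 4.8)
The plan is to unfold the definition of computability for paths and push everything down to the identity proof. By definition, $\models \reff{\varphi} : \varphi =_\Omega \varphi$ holds exactly when $\models \reff{\varphi}^+ : \varphi \supset \varphi$ and $\models \reff{\varphi}^- : \varphi \supset \varphi$. Both $\reff{\varphi}^+$ and $\reff{\varphi}^-$ reduce in one step to the identity proof $\lambda p : \varphi . p$, so by the Expansion Lemma (Lemma \ref{lm:expansion}) it suffices to prove the single fact $\models \lambda p : \varphi . p : \varphi \supset \varphi$.

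To establish this, I would first observe that since $\varphi$ reduces to a canonical proposition $\theta$, the proposition $\varphi \supset \varphi$ reduces, by congruence of $\supset$, to the canonical proposition $\theta \supset \theta$. Hence the clause of the computability definition governing a proposition that reduces to a canonical one applies, and $\models \lambda p : \varphi . p : \varphi \supset \varphi$ is equivalent to $\models \lambda p : \varphi . p : \theta \supset \theta$. Unfolding the implication clause, this in turn amounts to showing that for every $\epsilon$ with $\models \epsilon : \theta$ we have $\models (\lambda p : \varphi . p) \epsilon : \theta$.

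The final step is immediate: $(\lambda p : \varphi . p) \epsilon$ reduces in one step to $\epsilon$, and $\models \epsilon : \theta$ by assumption, so a further application of the Expansion Lemma yields $\models (\lambda p : \varphi . p) \epsilon : \theta$. Chaining the three reductions $\reff{\varphi}^\pm \rightarrow \lambda p : \varphi . p$ and $(\lambda p : \varphi . p)\epsilon \rightarrow \epsilon$ through Expansion then completes the argument.

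There is no deep obstacle here; the proof is essentially bookkeeping over the clauses of the computability predicate. The single point that requires care is the role of the hypothesis that $\varphi$ reduces to a canonical proposition: this is precisely what guarantees that $\varphi \supset \varphi$ reduces to a canonical proposition, without which the implication clause of the computability predicate could not be unfolded and the statement would not even be meaningful. Everything else rests on the fact that both transport operations on a reflexivity path compute to the identity proof, together with repeated use of Expansion.
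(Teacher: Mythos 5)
Your proof is correct and takes essentially the same route as the paper's: unfold the computability clauses for equations and implications down to the canonical proposition $\theta$, and discharge the resulting obligations with the Expansion Lemma along the reductions $\reff{\varphi}^{\pm}\epsilon \twoheadrightarrow \epsilon$. The only (immaterial) difference is that the paper states and proves the slightly more general fact $\models \reff{M} : \varphi =_\Omega \varphi$ for an arbitrary term $M$, and applies Expansion once to the whole applied term rather than first to $\reff{\varphi}^{\pm} \rightarrow \lambda p{:}\varphi.p$.
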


\begin{proof}
In fact we prove that, for any terms $M$ and $\varphi$ such that $\varphi$ reduces to a canonical proposition, we have
$\models \reff{M} : \varphi =_\Omega \varphi$.

It is sufficient to prove the case where $\varphi$ is a canonical proposition.
We must show that $\models \reff{M}^+ : \varphi \supset \varphi$ and $\models \reff{M}^- : \varphi \supset \varphi$.  So let $\models \delta : \varphi$.
Then $\models \reff{M}^+ \delta : \varphi$ and $\models \reff{M}^- \delta : \varphi$ by Expansion (Lemma \ref{lm:expansion}), as required.
\end{proof}

\begin{lemma}
\label{lm:models-canon}
$\models \varphi : \Omega$ if and only if $\varphi$ reduces to a canonical proposition.
\end{lemma}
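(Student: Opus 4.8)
The plan is to prove the two directions separately. The backward direction rests on a small auxiliary computation about empty path substitution, and the forward direction on Weak Normalization together with an analysis of when proof-computability can hold.

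For the backward direction, suppose $\varphi \twoheadrightarrow \theta$ with $\theta$ canonical. First I would establish, by induction on the canonical proposition $\theta$, that $\theta \{\} \twoheadrightarrow \reff{\theta}$: the base case $\bot \{\} \equiv \reff{\bot}$ is immediate, and for $\theta_1 \supset \theta_2$ we have $(\theta_1 \supset \theta_2) \{\} \equiv \theta_1 \{\} \supset^* \theta_2 \{\} \twoheadrightarrow \reff{\theta_1} \supset^* \reff{\theta_2} \rightarrow \reff{\theta_1 \supset \theta_2}$, using the induction hypothesis and the reduction rule for $\supset^*$ on reflexivity paths. Since reduction respects (empty) path substitution (Lemma \ref{lm:resp-sub}), from $\varphi \twoheadrightarrow \theta$ I obtain $\varphi \{\} \twoheadrightarrow \theta \{\} \twoheadrightarrow \reff{\theta}$. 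Now Lemma \ref{lm:ref-compute-Omega} gives $\models \reff{\theta} : \theta =_\Omega \theta$; since $\varphi \simeq \theta$ we have $\theta =_\Omega \theta \simeq \varphi =_\Omega \varphi$, so Conversion (Lemma \ref{lm:conv-compute}) yields $\models \reff{\theta} : \varphi =_\Omega \varphi$, and then repeated application of Expansion (Lemma \ref{lm:expansion}) along $\varphi \{\} \twoheadrightarrow \reff{\theta}$ gives $\models \varphi \{\} : \varphi =_\Omega \varphi$, which is by definition $\models \varphi : \Omega$.

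For the forward direction, suppose $\models \varphi : \Omega$. By Weak Normalization (Lemma \ref{lm:neutral-canon}(3)), $\varphi$ reduces either to a canonical proposition or to a $\lambda$-abstraction, so it suffices to rule out the latter. If $\varphi \twoheadrightarrow \lambda x:A.M$, then unfolding the definition of $\models \varphi : \Omega$, namely $\models \varphi \{\} : \varphi =_\Omega \varphi$, through the defining clause for $\Omega$-equations requires $\models \varphi \{\}^+ : \varphi \supset \varphi$. But $\varphi \supset \varphi \twoheadrightarrow (\lambda x:A.M) \supset (\lambda x:A.M)$, whose normal form is an implication with $\lambda$-abstraction immediate subterms; by confluence (Lemma \ref{lm:diamond}) this is not a canonical proposition, so none of the three clauses defining proof-computability applies to it, and $\models \delta : \varphi \supset \varphi$ fails for every $\delta$. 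This contradicts $\models \varphi \{\}^+ : \varphi \supset \varphi$, so $\varphi$ must reduce to a canonical proposition.

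The main obstacle I anticipate is the forward direction: making the exclusion of the $\lambda$-abstraction case rigorous depends on the reading that $\models \delta : \psi$ holds only when one of the three defining clauses positively applies, and hence fails whenever $\psi$ does not reduce to a canonical proposition. An alternative, avoiding this point entirely, is to appeal directly to the $n = 0$ case in the proof of Lemma \ref{lm:neutral-canon}(3), which already produces a canonical proposition rather than a $\lambda$-abstraction. The backward direction is then routine once the identity $\theta \{\} \twoheadrightarrow \reff{\theta}$ is in hand.
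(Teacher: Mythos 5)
Your proof is correct and follows essentially the same route as the paper: the backward direction is exactly the paper's argument (via $\theta\{\} \twoheadrightarrow \reff{\theta}$, Lemma \ref{lm:ref-compute-Omega} and Expansion), and your forward direction rests on the same key observation the paper uses, namely that $\models \varphi\{\}^+ : \varphi \supset \varphi$ forces $\varphi \supset \varphi$, and hence $\varphi$, to reduce to a canonical proposition. The only difference is presentational: the paper draws this conclusion directly from the definition of $\models$ on propositions, whereas you phrase it contrapositively via Weak Normalization and confluence, a detour that is sound but dispensable.
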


\begin{proof}
If $\models \varphi : \Omega$ then $\models \varphi \{\}^+ : \varphi \supset \varphi$.  Therefore $\varphi \supset \varphi$ reduces to a canonical proposition,
and so $\varphi$ must reduce to a canonical proposition.

Conversely, suppose $\varphi$ reduces to a canonical proposition $\theta$.  We have $\varphi \{\} \twoheadrightarrow \theta \{\}$, and $\theta \{\} \twoheadrightarrow \reff{\theta}$ for every canonical proposition $\theta$.  Therefore, $\models \varphi \{\} : \varphi =_\Omega \varphi$ by Expansion (Lemma \ref{lm:expansion}).  Hence $\models \varphi : \Omega$.
\end{proof}

\begin{lemma}
\label{lm:neutral-proof}
If $\delta$ is a neutral proof and $\varphi$ reduces to a canonical proposition, then $\models \delta : \varphi$.
\end{lemma}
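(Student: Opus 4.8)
The plan is first to eliminate the hypothesis that $\varphi$ merely \emph{reduces} to a canonical proposition, and then to argue by induction on the structure of that canonical proposition. Since $\varphi$ reduces to some canonical proposition $\theta$, the third clause in the definition of $\models \delta : \varphi$ tells us that $\models \delta : \varphi$ holds if and only if $\models \delta : \theta$. So it suffices to prove the following statement by induction on $\theta$: for every neutral proof $\delta$ and every canonical proposition $\theta$, we have $\models \delta : \theta$. Handling the reduction of $\varphi$ to $\theta$ once, at the outset, lets the induction proper be carried out entirely at the level of canonical propositions.

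In the base case $\theta \equiv \bot$, unfolding the definition requires showing that $\delta$ reduces to a neutral proof; but $\delta$ is neutral by hypothesis, so it reduces to a neutral proof in zero steps. In the inductive step $\theta \equiv \theta_1 \supset \theta_2$, the definition requires showing $\models \delta \epsilon : \theta_2$ for every $\epsilon$ with $\models \epsilon : \theta_1$. Here I would appeal to the one structural fact that makes the argument go through: by the grammar for neutral proofs, $\delta_n \epsilon$ is neutral whenever $\delta_n$ is, with no constraint whatsoever on $\epsilon$. Thus $\delta \epsilon$ is a neutral proof, and since $\theta_2$ is a canonical proposition strictly smaller than $\theta$, the induction hypothesis yields $\models \delta \epsilon : \theta_2$, which is exactly what is needed.

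I do not expect any genuine obstacle: this is the standard ``neutral expressions are computable'' lemma of the reducibility method, and its entire content is the closure of neutral proofs under application. The only points that warrant care are setting up the induction at the level of canonical propositions (so that the induction hypothesis is available at the subformula $\theta_2$ rather than only at the original $\varphi$), and noting that the constraint on $\epsilon$ in the implication clause is irrelevant to whether $\delta\epsilon$ is neutral.
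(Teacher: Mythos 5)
Your proof is correct and follows essentially the same route as the paper's: reduce to the case where $\varphi$ is itself a canonical proposition, then induct on its structure, using in the $\supset$ case that $\delta\epsilon$ is neutral whenever $\delta$ is. The only difference is that you spell out the zero-step reduction in the base case and the appeal to the conversion clause at the outset, which the paper leaves implicit.
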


\begin{proof}
It is sufficient to prove the case where $\varphi$ is a canonical proposition.  The proof is by induction on $\varphi$.

If $\varphi \equiv \bot$, then $\models \delta : \bot$ immediately from the definition.

If $\varphi \equiv \psi \supset \chi$, then let $\models \epsilon : \psi$.  We have that $\delta \epsilon$ is neutral,
hence $\models \delta \epsilon : \chi$ by the induction hypothesis.
\end{proof}

\begin{lemma}
\label{lm:neutral-path}
Let $\models M : A$ and $\models N : A$.  If $P$ is a neutral path, then $\models P : M =_A N$.
\end{lemma}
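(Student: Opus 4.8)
The plan is to argue by induction on the type $A$, mirroring the simultaneous recursion on $A$ that defines $\models M : A$ and $\models P : M =_A N$. In the base case $A \equiv \Omega$, I would unfold the definition: $\models P : M =_\Omega N$ holds iff $\models P^+ : M \supset N$ and $\models P^- : N \supset M$. Since $\models M : \Omega$ and $\models N : \Omega$, Lemma \ref{lm:models-canon} gives that $M$ and $N$ each reduce to a canonical proposition, hence so do $M \supset N$ and $N \supset M$. As $P$ is neutral, $P^+$ and $P^-$ are neutral proofs (they are formed by the clauses $P_n^+$ and $P_n^-$ of the grammar of neutral proofs), so Lemma \ref{lm:neutral-proof} yields $\models P^+ : M \supset N$ and $\models P^- : N \supset M$, which is exactly what the definition at type $\Omega$ requires.

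For the inductive case $A \equiv B \rightarrow C$, I would again unfold the definition: it suffices to show, for arbitrary $N_1$, $N_2$, $Q$ with $\models N_1 : B$, $\models N_2 : B$ and $\models Q : N_1 =_B N_2$, that $\models P_{N_1 N_2} Q : M N_1 =_C N N_2$. The path $P_{N_1 N_2} Q$ is neutral, being formed from the neutral path $P$ by the clause $(P_n)_{MN} Q$, so the induction hypothesis at the smaller type $C$ will close this goal, \emph{provided} I can supply $\models M N_1 : C$ and $\models N N_2 : C$, that is, once I know that both endpoints of the target equation are computable.

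Establishing this closure of computability under application is the only step that requires more than unfolding, and I expect it to be the main point of the proof. From $\models M : B \rightarrow C$, i.e. $\models M\{\} : M =_{B \rightarrow C} M$, I would instantiate the defining clause with both term endpoints taken to be $N_1$ and with the path $N_1\{\}$ --- legitimate because $\models N_1 : B$ is by definition $\models N_1\{\} : N_1 =_B N_1$ --- obtaining $\models M\{\}_{N_1 N_1} N_1\{\} : M N_1 =_C M N_1$. Since $(M N_1)\{\} \equiv M\{\}_{N_1 N_1} N_1\{\}$ by the path-substitution clause for an application under the empty substitution, this says precisely $\models M N_1 : C$; the symmetric argument using $\models N : B \rightarrow C$ and $\models N_2 : B$ gives $\models N N_2 : C$. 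With both endpoints computable and $P_{N_1 N_2} Q$ neutral, the induction hypothesis delivers $\models P_{N_1 N_2} Q : M N_1 =_C N N_2$, completing the inductive case and the proof.
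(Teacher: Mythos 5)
Your proof is correct and follows essentially the same route as the paper's: induction on $A$, with the base case handled by Lemmas \ref{lm:models-canon} and \ref{lm:neutral-proof} and the arrow case by neutrality of $P_{N_1 N_2}Q$ and the induction hypothesis. The only difference is that you explicitly justify the closure of computability under application (via $(MN_1)\{\} \equiv M\{\}_{N_1 N_1} N_1\{\}$), a step the paper simply asserts; your justification is valid.
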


\begin{proof}
The proof is by induction on $A$.

For $A \equiv \Omega$: we have that $P^+$ and $P^-$ are neutral proofs, and $M$ and $N$ reduce to canonical propositions (by Lemma \ref{lm:models-canon}), so $\models P^+ : M \supset N$ and
$\models P^- : N \supset M$ by Lemma \ref{lm:neutral-proof}, as required.

For $A \equiv B \rightarrow C$: let $\models L : B$, $\models L' : B$ and $\models Q : L =_B L'$.  Then we have $\models ML : C$, $\models NL' : C$ and
$P_{LL'} Q$ is a neutral path, hence $\models P_{L L'} Q : ML =_C NL'$ by the induction hypothesis, as required.
\end{proof}

\begin{lemma}
\label{lm:ref-compute}
If $\models M : A$ then $\models \reff{M} : M =_A M$.
\end{lemma}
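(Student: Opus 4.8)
The plan is to prove the statement by analysing the shape of the type $A$, handling $A \equiv \Omega$ directly from the preceding lemmas and reducing the arrow case to the computability of the canonical path substitution $M\{\}$, which the hypothesis already supplies. Neither case needs an induction hypothesis, so this is really a case split on whether $A$ is $\Omega$ or a function type.

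For $A \equiv \Omega$: from $\models M : \Omega$ and Lemma~\ref{lm:models-canon}, the term $M$ reduces to a canonical proposition, and then $\models \reff{M} : M =_\Omega M$ is exactly Lemma~\ref{lm:ref-compute-Omega} applied with $\varphi \equiv M$.

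For $A \equiv B \to C$: by the arrow clause of the definition of computability it suffices to show $\models \reff{M}_{NN'} Q : MN =_C MN'$ whenever $\models N : B$, $\models N' : B$ and $\models Q : N =_B N'$. First I would observe that the hypothesis $\models M : B \to C$ unfolds to $\models M\{\} : M =_{B\to C} M$, so feeding it the data $N$, $N'$, $Q$ gives $\models M\{\}_{NN'} Q : MN =_C MN'$ for free. It then remains only to connect $\reff{M}_{NN'}Q$ to $M\{\}_{NN'}Q$ by reduction. By Lemma~\ref{lm:pre-ref-compute}, $M \twoheadrightarrow \lambda x : B. M_0$ for some $M_0$. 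On the one hand, reducing inside the $\reff{-}$ and then firing the appropriate rule gives $\reff{M}_{NN'}Q \twoheadrightarrow \reff{\lambda x:B.M_0}_{NN'}Q \rightarrow M_0\{x := Q : N = N'\}$. On the other hand, since reduction respects path substitution (Lemma~\ref{lm:resp-sub}), $M\{\} \twoheadrightarrow (\lambda x:B.M_0)\{\} \equiv \triplelambda e:a=_B a'. M_0\{x := e : a = a'\}$, so that $M\{\}_{NN'}Q \twoheadrightarrow M_0\{x := e : a=a'\}[a := N, a' := N', e := Q]$. These two reducts coincide provided the substitution identity
$$M_0\{x := e : a = a'\}[a := N, a' := N', e := Q] \equiv M_0\{x := Q : N = N'\}$$
holds. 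Granting it, Lemma~\ref{lm:reduction} transports computability from $M\{\}_{NN'}Q$ down to the common reduct, and Lemma~\ref{lm:expansion} transports it back up along $\reff{M}_{NN'}Q \twoheadrightarrow M_0\{x := Q : N=N'\}$, which finishes the arrow case.

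The hard part will be the displayed substitution identity: it says that instantiating the bound endpoint variables $a$, $a'$ and the path variable $e$ of $M_0\{x := e : a = a'\}$ with $N$, $N'$, $Q$ reproduces $M_0\{x := Q : N = N'\}$ on the nose. This is not literally an instance of Lemma~\ref{lm:subpathsub}, since that lemma commutes only an ordinary term substitution past a path substitution, whereas here we also substitute a \emph{path} for the path variable $e$; nevertheless it is proved by the same kind of straightforward induction on $M_0$, where the variable, $\bot$, application and $\lambda$ cases each reduce to the clause of the definition of path substitution together with the induction hypothesis. Everything else is routine bookkeeping with the reduction rules for $\reff{-}$ and $\triplelambda$ and the Expansion and Reduction lemmas.
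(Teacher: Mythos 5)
Your proof is correct and follows essentially the same route as the paper: the $\Omega$ case is discharged via Lemmas \ref{lm:models-canon} and \ref{lm:ref-compute-Omega}, and the arrow case connects $\reff{M}_{NN'}Q$ and $M\{\}_{NN'}Q$ through their common reduct $M_0\{x := Q : N = N'\}$ using Lemma \ref{lm:pre-ref-compute} together with Expansion and Reduction. The differences are cosmetic (the paper first replaces $M$ by its $\lambda$-reduct and then invokes Conversion to adjust the endpoints, whereas you keep the type label $MN =_C M N'$ fixed throughout), and you are right that the displayed substitution identity is an implicit step the paper also relies on without stating it; it holds by the straightforward induction on $M_0$ that you describe.
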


\begin{proof}
If $A \equiv \Omega$, this is just Lemma \ref{lm:ref-compute-Omega}.

So suppose $A \equiv B \rightarrow C$.  Using Lemma \ref{lm:pre-ref-compute}, Reduction (Lemma \ref{lm:reduction}) and Expansion (Lemma \ref{lm:expansion}),
we may assume that $M$ is a $\lambda$-term.  Let $M \equiv \lambda y:D.N$.

Let $\models L : B$ and $\models L' : B$ and $\models P : L =_B L'$.  We must show that
$$ \models \reff{\lambda y:D.N}_{L L'} P : (\lambda y:D.N)L =_C (\lambda y:D.N)L' \enspace . $$
By Expansion and Conversion, it is sufficient to prove
$$ \models N \{ y := P : L = L' \} : N [ y:= L ] =_C N [y := L'] \enspace . $$
We have that $\models (\lambda y:D.N)\{\} : \lambda y:D.N =_{B \rightarrow C} \lambda y:D.N$, and so
$$ \models (\triplelambda e : y =_D y' . N \{ y := e : y = y' \})_{L L'} P : (\lambda y:D.N)L =_C (\lambda y:D.N)L' \enspace , $$
and the result follows by Reduction and Conversion.
\end{proof}

\begin{lemma}
\label{lm:compute-supset*}
If $\models P : \varphi =_\Omega \varphi'$ and $\models Q : \psi =_\Omega \psi'$ then $\models P \supset^* Q : \varphi \supset \psi =_\Omega \varphi' \supset \psi'$.
\end{lemma}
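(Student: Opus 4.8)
The plan is to prove the two halves of the defining condition for $\models P \supset^* Q : \varphi \supset \psi =_\Omega \varphi' \supset \psi'$ separately, namely $\models (P \supset^* Q)^+ : (\varphi \supset \psi) \supset (\varphi' \supset \psi')$ and $\models (P \supset^* Q)^- : (\varphi' \supset \psi') \supset (\varphi \supset \psi)$; the two arguments are mirror images, so I concentrate on the $+$ half. Unpacking the hypotheses, $\models P : \varphi =_\Omega \varphi'$ gives $\models P^+ : \varphi \supset \varphi'$ and $\models P^- : \varphi' \supset \varphi$, and likewise $\models Q^+ : \psi \supset \psi'$ and $\models Q^- : \psi' \supset \psi$. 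For these to hold, all of $\varphi,\varphi',\psi,\psi'$ must reduce to canonical propositions (by the definition of computability and Lemma \ref{lm:models-canon}), so $\varphi \supset \psi$ and $\varphi' \supset \psi'$ do too, and unfolding the definition reduces the $+$ half to: for every $\epsilon$ with $\models \epsilon : \varphi \supset \psi$ and every $\zeta$ with $\models \zeta : \varphi'$, show $\models (P \supset^* Q)^+ \epsilon \zeta : \psi'$.

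By Weak Normalization (Lemma \ref{lm:neutral-canon}) each of $P$ and $Q$ reduces to a neutral path or a canonical path, and I first reduce both to head normal form. If either reduces to a neutral path, then, consulting the grammar of neutral paths, $P \supset^* Q$ reduces to a neutral path and hence $(P \supset^* Q)^+$ reduces to a neutral proof; Lemma \ref{lm:neutral-proof} gives its computability at the canonical proposition $(\varphi \supset \psi) \supset (\varphi' \supset \psi')$, and Expansion (Lemma \ref{lm:expansion}) transfers this back to $(P \supset^* Q)^+$. A canonical path of equational type $=_\Omega$ is either a reflexivity path $\reff{\cdot}$ or a univalence path $\univ{\cdot}{\cdot}{\cdot}{\cdot}$: the shape $\triplelambda e : x =_A y . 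R$ cannot occur here, since if $P$ reduced to such a path then $P^+$ would reduce only to terms of the form $(\triplelambda e : x =_A y . R')^+$, none of which is neutral or canonical, contradicting $\models P^+$ via Lemma \ref{lm:neutral-canon}.

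This leaves the four combinations in which $P$ and $Q$ each reduce to a $\reff{\cdot}$ or a $\univ{\cdot}{\cdot}{\cdot}{\cdot}$. In each case the matching $\supset^*$ reduction rule of Figure \ref{fig:reduction} fires, and after taking $+$ and applying to $\epsilon$ and $\zeta$ the term $(P \supset^* Q)^+ \epsilon \zeta$ reduces to a proof built by applying the head normal form of $Q^+$ to $\epsilon$ applied to the head normal form of $P^-$ applied to $\zeta$ — where a side that is a $\reff{\cdot}$ contributes the identity $\lambda p . p$, which disappears on reduction (so that in those cases the argument $\widehat{P^-}\zeta$ collapses to $\zeta$, or the outer application collapses). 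Writing $\widehat{P^-}$ and $\widehat{Q^+}$ for these head normal forms, Reduction (Lemma \ref{lm:reduction}) applied to $\models P^- : \varphi' \supset \varphi$ and $\models Q^+ : \psi \supset \psi'$ gives $\models \widehat{P^-} : \varphi' \supset \varphi$ and $\models \widehat{Q^+} : \psi \supset \psi'$; hence $\models \widehat{P^-}\zeta : \varphi$, then $\models \epsilon(\widehat{P^-}\zeta) : \psi$, then $\models \widehat{Q^+}(\epsilon(\widehat{P^-}\zeta)) : \psi'$. Since $(P \supset^* Q)^+ \epsilon \zeta$ reduces to this computable proof, Expansion (Lemma \ref{lm:expansion}) yields $\models (P \supset^* Q)^+ \epsilon \zeta : \psi'$. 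The $-$ half is identical, using the mirror halves of the same four $\supset^*$ rules together with $\models P^+$ and $\models Q^-$.

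The main obstacle is the case analysis itself: one must check that the four explicit proofs produced by the $\supset^*$ rules genuinely reduce to the uniform shape $\widehat{Q^+}(\epsilon(\widehat{P^-}\zeta))$, correctly tracking the identity-proof degeneracies when a side is a $\reff{\cdot}$, and one must dispose of the stray $\triplelambda$ head form. Beyond this bookkeeping the argument is a routine assembly from Expansion, Reduction and Conversion (Lemmas \ref{lm:expansion}, \ref{lm:reduction}, \ref{lm:conv-compute}) together with the computability of neutral proofs (Lemma \ref{lm:neutral-proof}).
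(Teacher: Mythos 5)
Your proof is correct and follows essentially the same route as the paper's: normalize $P$ and $Q$ to neutral or canonical head forms, exclude the $\triplelambda$ shape by observing that it would leave $P^+$ stuck (contradicting its computability), dispatch the neutral case via the neutrality lemmas, and run the four $\reff{-}$/$\univ{-}{-}{-}{-}$ combinations through the uniform reduct $Q^+(\delta(P^-\epsilon))$ using Reduction and Expansion. The only immaterial differences are that the paper rules out $\triplelambda$ by applying $P^+$ to neutral arguments all the way down to $\bot$ rather than citing Lemma \ref{lm:neutral-canon} directly, and treats the neutral case with Lemma \ref{lm:neutral-path} at the path level rather than Lemma \ref{lm:neutral-proof} at the proof level.
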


\begin{proof}
%Extract as lemma
By Reduction (Lemma \ref{lm:reduction}) and Expansion (Lemma \ref{lm:expansion}), we may assume that $P$ and $Q$ are either neutral, or have the form $\reff{-}$ or $\univ{-}{-}{-}{-}$ or $\triplelambda e : x =_A y.-$.

We cannot have that $P$ reduces to a $\triplelambda$-path; for let $\varphi'$ reduce to the canonical proposition $\theta_1 \supset \cdots \supset \theta_n \supset \bot$.  Then we have
\[ \models P^+ p q_1 \cdots q_n : \bot \]
and so $P^+ p q_1 \cdots q_n$ must reduce to a neutral path.  Similarly, $Q$ cannot reduce to a $\triplelambda$-path.

If either $P$ or $Q$ is neutral then $P \supset^* Q$ is neutral, and the result follows from Lemma \ref{lm:neutral-path}.

Otherwise, let $\models \delta : \varphi \supset \psi$ and $\epsilon \models \varphi'$.  We must show that $\models (P \supset^* Q)^+ \delta \epsilon : \psi'$.

If $P \equiv \reff{M}$ and $Q \equiv \reff{N}$, then we have
$$ (P \supset^* Q)^+ \delta \epsilon \rightarrow \reff{M \supset N}^+ \delta \epsilon \rightarrow \delta \epsilon \enspace . $$
Now, $\models P^- \epsilon : \varphi$, hence $\models \epsilon : \varphi$ by Reduction, and so $\models \delta \epsilon : \psi$.  Therefore, $\models Q^+ (\delta \epsilon) : \psi'$,
and hence by Reduction $\models \delta \epsilon : \psi'$ as required.

If $P \equiv \reff{M}$ and $Q \equiv \univ{N}{N'}{\chi}{\chi'}$, then we have
\begin{align*}
(P \supset^* Q)^+ \delta \epsilon & \rightarrow \univ{M \supset N}{M \supset N'}{\lambda pq.\chi(pq)}{\lambda pq.\chi'(pq)}^+ \delta \epsilon \\
& \rightarrow (\lambda pq.\chi(pq)) \delta \epsilon \\
& \twoheadrightarrow \chi (\delta \epsilon)
\end{align*}
We have $\models P^- \epsilon : \varphi$, hence $\models \epsilon : \varphi$ by Reduction, and so $\models \delta \epsilon : \psi$.  Therefore,
$\models Q^+ (\delta \epsilon) : \psi'$, and hence by Reduction $\models \chi (\delta \epsilon) : \psi'$ as required.

The other two cases are similar.
\end{proof}

\begin{lemma}
\label{lm:univ-compute}
If $\models \delta : \phi \supset \psi$ and $\models \epsilon : \psi \supset \phi$ then $\models \univ{\phi}{\psi}{\delta}{\epsilon} : \phi =_\Omega \psi$.
\end{lemma}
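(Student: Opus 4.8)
The plan is to unfold the definition of a computable path over $\Omega$ and reduce the goal to two immediate applications of the Expansion Lemma. By the defining clause ``$\models P : \varphi =_\Omega \psi$ iff $\models P^+ : \varphi \supset \psi$ and $\models P^- : \psi \supset \varphi$'', it suffices to establish the two statements $\models \univ{\phi}{\psi}{\delta}{\epsilon}^+ : \phi \supset \psi$ and $\models \univ{\phi}{\psi}{\delta}{\epsilon}^- : \psi \supset \phi$, after which the result follows by that clause applied to the path $\univ{\phi}{\psi}{\delta}{\epsilon}$.

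For the first statement, I would invoke the reduction rule $\univ{\phi}{\psi}{\delta}{\epsilon}^+ \rightarrow \delta$, which is a single call-by-name step. Since the hypothesis supplies $\models \delta : \phi \supset \psi$, the Expansion Lemma (Lemma \ref{lm:expansion}) immediately yields $\models \univ{\phi}{\psi}{\delta}{\epsilon}^+ : \phi \supset \psi$. The second statement is entirely symmetric: from $\univ{\phi}{\psi}{\delta}{\epsilon}^- \rightarrow \epsilon$ together with the hypothesis $\models \epsilon : \psi \supset \phi$, Expansion gives $\models \univ{\phi}{\psi}{\delta}{\epsilon}^- : \psi \supset \phi$. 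Combining the two via the defining clause for $=_\Omega$ closes the argument.

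I do not anticipate a genuine obstacle here. Unlike Lemma \ref{lm:compute-supset*}, where the $\supset^*$ constructor must be analysed against the various combinations of reflexivity and univalence paths, the $\mathsf{univ}$ constructor projects directly onto its two proof components in one reduction step each, so no case analysis on the shape of $\delta$ or $\epsilon$ is required. The only point worth confirming is that the statement is well-formed: since $\models \delta : \phi \supset \psi$ presupposes that $\phi \supset \psi$ reduces to a canonical proposition, both $\phi$ and $\psi$ reduce to canonical propositions, and hence the clause defining $\models \univ{\phi}{\psi}{\delta}{\epsilon} : \phi =_\Omega \psi$ applies exactly as intended.
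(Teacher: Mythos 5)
Your proposal is correct and matches the paper's proof exactly: both unfold the defining clause for $\models P : \phi =_\Omega \psi$ and then discharge the two resulting obligations by applying Expansion (Lemma \ref{lm:expansion}) to the one-step reductions $\univ{\phi}{\psi}{\delta}{\epsilon}^+ \rightarrow \delta$ and $\univ{\phi}{\psi}{\delta}{\epsilon}^- \rightarrow \epsilon$.
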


\begin{proof}
We must show that $\models \univ{\phi}{\psi}{\delta}{\epsilon}^+ : \phi \supset \psi$ and $\models \univ{\phi}{\psi}{\delta}{\epsilon}^- : \psi \supset \phi$.
These follow from the hypotheses, using Expansion (Lemma \ref{lm:expansion}).
\end{proof}

\section{Proof of Canonicity}

\begin{theorem}
$ $
\begin{enumerate}
\item
If $\Gamma \vdash \mathcal{J}$ and $\models \sigma : \Gamma$, then $\models \mathcal{J} [ \sigma ]$.
\item
If $\Gamma \vdash M : A$ and $\models \tau : \rho = \sigma : \Gamma$, then $\models M \{ \tau : \rho = \sigma \} : M [ \rho ] =_A M [ \sigma ]$.
\end{enumerate}
\end{theorem}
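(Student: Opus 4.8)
The plan is to prove both statements by a single simultaneous induction on the derivation of $\Gamma \vdash \mathcal{J}$ (for part~1) and of $\Gamma \vdash M : A$ (for part~2), establishing at each node both the part-1 conclusion for every computable $\sigma$ and the part-2 conclusion for every computable path substitution $\tau : \rho = \sigma : \Gamma$. The two parts have to be run together because they call one another: in part~2 the application clause $(LL')\{\tau\} \equiv L\{\tau\}_{L'[\rho]\, L'[\sigma]} L'\{\tau\}$ needs $\models L'[\rho] : A$ and $\models L'[\sigma] : A$ from part~1 (applied to $L'$ under $\rho$ and under $\sigma$) before the arrow clause of computability can be invoked, while the term-abstraction case of part~1 is reduced to a part-2 statement about the body.

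Most rules are then discharged directly from the lemmas already proved. The variable rules $(\mathrm{var}_T)$, $(\mathrm{var}_P)$, $(\mathrm{var}_E)$ are immediate from the definition of a computable (path) substitution; $(\bot)$ and $(\supset)$ follow from Lemma~\ref{lm:models-canon}; proof application and term application use the $\supset$-clause and the arrow-clause of the definition directly; $(\mathrm{conv}_P)$ and $(\mathrm{conv}_E)$ are Lemma~\ref{lm:conv-compute}; and the path rules $(\mathrm{ref})$, $(\supset^*)$, $(\mathrm{univ})$ are exactly Lemmas~\ref{lm:ref-compute}, \ref{lm:compute-supset*} and \ref{lm:univ-compute}, with $(\mathrm{plus})$, $(\mathrm{minus})$ and $(\mathrm{app}_E)$ falling straight out of the definition of $\models P : M =_\Omega N$ and $\models P : M =_{A \to B} N$. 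For part~2, the $\bot$- and $\supset$-clauses of path substitution reduce to $\reff{\bot}$ (a canonical proposition) and to an application of Lemma~\ref{lm:compute-supset*}, and the application clause is handled by the arrow clause exactly as above.

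The real work is in the binders. For $(\lambda_P)$, to show $\models (\lambda p : \varphi . \delta)[\sigma] : (\varphi \supset \psi)[\sigma]$ I would use Lemma~\ref{lm:models-canon} to reduce to the canonical form $\theta \supset \theta'$, take an arbitrary $\models \epsilon : \theta$, and apply Expansion (Lemma~\ref{lm:expansion}) to the $\beta$-redex together with the induction hypothesis for $\delta$ under the extended computable substitution $\sigma, p := \epsilon$ (Lemma~\ref{lm:conv-compute} supplying $\models \epsilon : \varphi[\sigma]$). The rule $(\triplelambda)$ is the heart of the argument: to show $\models \triplelambda e : x =_A y . P : M =_{A \to B} N$ I would take arbitrary computable $N_1, N_2 : A$ and $\models Q : N_1 =_A N_2$, reduce $(\triplelambda e : x =_A y . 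P)_{N_1 N_2} Q \rightarrow P[x := N_1, y := N_2, e := Q]$, and combine Expansion with the part-1 induction hypothesis for $P$ under the substitution extending $\sigma$ by $x := N_1$, $y := N_2$, $e := Q$, which is computable precisely because $N_1$, $N_2$ and $Q$ are. The term-abstraction rule $(\lambda_T)$ and the $\lambda$-clause of part~2 are then instances of the same pattern applied to the $\triplelambda$-path produced by $\{\}$ and by $\{\tau\}$ respectively: feed in computable $N_1, N_2, Q$, $\beta$-reduce the $\triplelambda$, and appeal to the part-2 induction hypothesis for the body under the extended computable path substitution $\tau, y := Q : N_1 = N_2$.

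I expect the main obstacle to be exactly this last family of cases, for two reasons. First, extending a computable path substitution over a binder in the $(\lambda_T)$ case requires reflexivity paths on the already-substituted $\Gamma$-variables to be computable, so the argument leans on Lemma~\ref{lm:ref-compute} to witness that the diagonal path substitution $\reff{\sigma}$ is computable before it can be extended by $Q$. Second, identifying the contractum of the $\triplelambda$-redex with the intended path-substitution instance depends on the interaction Lemma~\ref{lm:subpathsub}, so one must check carefully that the bookkeeping of substitution-after-path-substitution lines up before Expansion applies. Getting these identifications exactly right, uniformly across the term, proof and path layers, is where the bulk of the effort will lie.
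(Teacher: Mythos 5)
Your proposal is correct and follows essentially the same route as the paper: a simultaneous induction on derivations for both parts, discharging most rules via the lemmas of Section~\ref{section:computable}, and handling the binder cases by extending the (path) substitution with the supplied computable arguments, using Lemma~\ref{lm:ref-compute} to make the diagonal path substitution computable, identifying the contractum via Lemma~\ref{lm:subpathsub}, and closing with Expansion and Conversion. This is exactly the argument the paper gives for its one worked case, $(\lambda_T)$.
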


\begin{proof}
The proof is by induction on derivations.  Most cases are straightforward, using the lemmas from
Section \ref{section:computable}.  We deal with one case here, the rule ($\lambda_T$).

$$ \infer{\Gamma \vdash \lambda x:A.M : A \rightarrow B}{\Gamma, x : A \vdash M : B} $$

\begin{enumerate}
\item
We must show that
$$ \models \lambda x:A.M[\sigma] : A \rightarrow B \enspace . $$
So let $\models Q : N =_A N'$.  Define the path substitution $\tau$ by
$$ \tau(x) \equiv Q, \qquad \tau(y) \equiv \reff{\sigma(y)} \ (y \in \dom \Gamma) $$
Then we have $\models \tau : (\sigma, x:=N) = (\sigma, x:=N') : \Gamma , x : A$, and so the induction hypothesis gives
$$ \models M \{ \tau \} : M[\sigma, x:=N] =_B M [ \sigma, x:= N' ] $$
We observe that $M \{ \tau \} \equiv M [ \sigma ] \{ x:=Q:N=N' \}$ (Lemma \ref{lm:pathsubsub}),
and so by Expansion (Lemma \ref{lm:expansion}) and Conversion (Lemma \ref{lm:conv-compute}) we have
$$ \models (\lambda x:A.M[\sigma])\{\}_{N N'} Q : (\lambda x:A.M[\sigma])N =_B (\lambda x:A.M[\sigma])N' $$
as required.
\item
We must show that
$$  \models \triplelambda e : x =_A y. M \{ \tau : \rho = \sigma, x := e : x = y \} : \lambda x:A.M [ \rho ] =_{A \rightarrow B} \lambda x:A.M [ \sigma ] \enspace . $$
So let $ \models P : N =_A N'$.  The induction hypothesis gives
$$  \models M \{ \tau : \rho = \sigma, x := P : N = N' \} : M [\rho, x := N] =_B M [\sigma, x := N'] \enspace , $$
and so we have
\begin{align*}
\models & (\triplelambda e : x =_A y. M \{ \tau : \rho = \sigma, x := e : x = y \})_{N N'} P \\
: & (\lambda x:A.M [ \rho ])N =_B (\lambda x:A.M [ \sigma ])N'
\end{align*}
by Expansion and Conversion, as required.
\end{enumerate}
\end{proof}

\begin{corollary}
Let $\Gamma$ be a context in which no term variables occur.
\begin{enumerate}
\item
If $\Gamma \vdash \delta : \phi$ then $\delta$ reduces to a neutral proof or canonical proof.
\item
If $\Gamma \vdash P : M =_A N$ then $P$ reduces to a neutral path or canonical path.
\end{enumerate}
\end{corollary}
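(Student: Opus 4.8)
The plan is to deduce the corollary from the preceding theorem by showing that the \emph{identity} substitution on $\Gamma$ is computable. Because $\Gamma$ contains no term variables, its only entries are proof variables $p : \varphi$ and path variables $e : M =_A N$. The crucial observation is that any term well-typed in (a prefix of) $\Gamma$ is \emph{closed}: the free variables of a term are always term variables, and $\Gamma$ has none. Hence each such $\varphi$, $M$ and $N$ is a closed term, and the identity substitution $\id_\Gamma$ (sending each $p$ to $p$ and each $e$ to $e$) acts as the identity on every term, proof and path, so that $\varphi[\id_\Gamma] \equiv \varphi$, $\delta[\id_\Gamma] \equiv \delta$, and so on.

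Granting $\models \id_\Gamma : \Gamma$, the corollary is immediate: applying part~1 of the preceding theorem to $\Gamma \vdash \delta : \phi$ gives $\models \delta : \phi$, and applying it to $\Gamma \vdash P : M =_A N$ gives $\models P : M =_A N$; the two conclusions then follow at once from Weak Normalization (Lemma~\ref{lm:neutral-canon}, parts~1 and~2). So the real work is to establish $\models \id_\Gamma : \Gamma$, which I would prove by induction on the length of $\Gamma$. The empty context is trivial. For $\Gamma \equiv \Gamma', p : \varphi$, Context Validity together with inversion on the context-formation rules gives $\Gamma' \vdash \varphi : \Omega$; the induction hypothesis supplies $\models \id_{\Gamma'} : \Gamma'$, so part~1 of the theorem yields $\models \varphi : \Omega$, whence $\varphi$ reduces to a canonical proposition by Lemma~\ref{lm:models-canon}. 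Since $p$ is a neutral proof, Lemma~\ref{lm:neutral-proof} then gives $\models p : \varphi$, extending the computable substitution. For $\Gamma \equiv \Gamma', e : M =_A N$, inversion gives $\Gamma' \vdash M : A$ and $\Gamma' \vdash N : A$, the theorem gives $\models M : A$ and $\models N : A$, and since $e$ is a neutral path, Lemma~\ref{lm:neutral-path} gives $\models e : M =_A N$. Closedness is what lets the conditions inherited from $\Gamma'$ transfer unchanged to the longer context, since the relevant types are unaffected by the extended substitution.

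I expect the main obstacle to be precisely the hypothesis that $\Gamma$ has no term variables, and making clear why it is essential. If a term variable $x : A$ were present, computability of the identity substitution would demand $\models x : A$, i.e.\ $\models \reff{x} : x =_A x$; but for $A \equiv \Omega$ the proposition $x$ is neutral and does not reduce to a canonical proposition, so $\models \delta : x$ is not even defined by the clauses for computable proofs, and the argument collapses. The no-term-variable assumption is exactly what guarantees, via the closedness observation above, that every proposition occurring in $\Gamma$ reduces to a canonical proposition, so that the neutral variables $p$ and $e$ can be absorbed by Lemmas~\ref{lm:neutral-proof} and~\ref{lm:neutral-path}. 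Everything else is a direct instantiation of the preceding theorem followed by Weak Normalization, requiring no new computation.
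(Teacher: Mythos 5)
Your proposal is correct and follows essentially the same route as the paper: establish $\models \id : \Gamma$ using Lemmas \ref{lm:neutral-proof} and \ref{lm:neutral-path}, apply the main theorem to get $\models E : T$, and conclude by Weak Normalization. Your version merely spells out the induction on the context and the role of the no-term-variable hypothesis, which the paper leaves implicit.
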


\begin{proof}
Let $\id$ be the substitution $\Gamma \Rightarrow \Gamma$ such that $\id(x) \eqdef x$.
If $\Gamma \vald$ then $\models \id : \Gamma$ using Lemmas \ref{lm:neutral-proof} and \ref{lm:neutral-path}.

Therefore, if $\Gamma \vdash E : T$ then $\models E [ \id ] : T [ \id ]$, that is, $\models E : T$.  Hence $E$ reduces to
a neutral expression or canonical expression.
\end{proof}

\begin{corollary}[Canonicity]
Let $\Gamma$ be a context with no term variables.
\begin{enumerate}
\item
If $\Gamma \vdash \delta : \bot$ then $\delta$ reduces to a neutral proof.
\item
If $\Gamma \vdash \delta : \phi \supset \psi$ then $\delta$ reduces either to a neutral proof, or a proof $\lambda p : \phi' . \epsilon$ where $\phi \simeq \phi'$
and $\Gamma, p : \phi \vdash \epsilon : \psi$.
\item
If $\Gamma \vdash P : \phi =_\Omega \psi$ then $P$ reduces either to a neutral path; or to $\reff{\chi}$ where $\phi \simeq \psi \simeq \chi$; or to
$\univ{\phi'}{\psi'}{\delta}{\epsilon}$ where $\phi \simeq \phi'$, $\psi \simeq \psi'$, $\Gamma \vdash \delta : \phi \supset \psi$ and $\Gamma \vdash \epsilon : \psi \supset \phi$.
\item
If $\Gamma \vdash P : M =_{A \rightarrow B} M'$ then $P$ reduces either to a neutral path; or to $\reff{N}$ where $M \simeq M' \simeq N$; or to
$\triplelambda e : x =_A y. Q$ where $\Gamma, x : A, y : A, e : x =_A y \vdash Q : Mx =_B M'y$.
\end{enumerate}
\end{corollary}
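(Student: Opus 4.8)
The plan is to derive this sharper statement directly from the preceding Corollary, which already tells us that each of $\delta$ and $P$ reduces to a neutral or a canonical expression. The remaining work is to pin down precisely which canonical forms can occur for a given type, and to recover the well-typedness of the immediate subterms. For this I would combine three tools: Subject Reduction, to transport the typing of the original expression onto whatever canonical form it reduces to; the Generation Lemma, to read off the typing of the components of that canonical form together with the conversions relating the indices; and Confluence (Lemma \ref{lm:diamond}), to turn those conversions into the concrete relations demanded in the statement.

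Concretely, in each part I would apply the preceding Corollary to split into the neutral case (where there is nothing further to prove) and the canonical case. In the canonical case, Subject Reduction gives that the canonical reduct inherits the original type, and Generation then supplies the typing of its subterms. The key structural observation is that call-by-name reduction never alters the outermost constructor of a term: $\bot$ reduces only to $\bot$, and $\varphi \supset \psi$ reduces only to some $\varphi' \supset \psi'$. Hence, by Confluence, $\bot$ is never convertible with an implication, and $\varphi_1 \supset \psi_1 \simeq \varphi_2 \supset \psi_2$ forces $\varphi_1 \simeq \varphi_2$ and $\psi_1 \simeq \psi_2$. This lets me rule out the impossible canonical forms: a proof of $\bot$ cannot reduce to a $\lambda$-abstraction (part 1), a path of an equation at type $\Omega$ cannot reduce to a $\triplelambda$-path, and a path of an equation at a function type cannot reduce to a univalence path (parts 3 and 4), since in each such case Generation would demand an impossible identification of proposition- or type-constructors. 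In the surviving cases it converts the indices produced by Generation into the stated $\simeq$-relations. Finally, wherever the statement asks for a judgement such as $\Gamma \vdash \delta : \phi \supset \psi$ in place of the merely $\simeq$-related proposition returned by Generation, I would apply $(\mathrm{conv}_P)$, using Type Validity on the original judgement to supply the well-formedness of $\phi \supset \psi$ (and similarly for the endpoints of equations via $(\mathrm{conv}_E)$).

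The one step that needs more than a routine appeal to the earlier lemmas is the \emph{context} in part 2: Generation yields $\Gamma, p : \phi' \vdash \epsilon : \chi$ with $\phi \simeq \phi'$ and $\psi \simeq \chi$, whereas the statement requires $\Gamma, p : \phi \vdash \epsilon : \psi$. To change the declared type of $p$ from $\phi'$ to the convertible $\phi$, I would check that the identity map is a well-typed substitution $(\Gamma, p : \phi) \Rightarrow (\Gamma, p : \phi')$; the only nontrivial obligation is $\Gamma, p : \phi \vdash p : \phi'$, which follows from $(\mathrm{var}_P)$ and $(\mathrm{conv}_P)$ since $\phi \simeq \phi'$. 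Invoking Well-Typed Substitution and then $(\mathrm{conv}_P)$ to pass from $\chi$ to $\psi$ completes the case. This small context-conversion argument, rather than any deep computation, is the main obstacle; everything else is bookkeeping over the finitely many canonical forms admitted at each type.
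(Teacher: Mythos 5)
Your proposal is correct and follows essentially the same route as the paper: invoke the preceding corollary to reduce to a neutral or canonical form, then do a case analysis on the canonical forms using Subject Reduction and the Generation Lemma (with Confluence discriminating $\bot$ from implications and giving injectivity of $\supset$ up to $\simeq$). The paper states this in two sentences; you have simply filled in the bookkeeping, including the context-conversion step in part 2, which the paper leaves implicit.
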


\begin{proof}
A closed expression cannot be neutral, so from the previous corollary every typed closed expression must reduce to a canonical expression.
We now apply case analysis to the possible forms of canonical expression, and use the Generation Lemma.
\end{proof}

\begin{corollary}[Conistency]
There is no $\delta$ such that $\vdash \delta : \bot$.
\end{corollary}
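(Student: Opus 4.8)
The plan is to derive a contradiction from the assumption that some closed proof inhabits $\bot$, by combining the first clause of the Canonicity corollary with the observation that no closed expression can be neutral. Suppose for contradiction that $\vdash \delta : \bot$. The empty context $\langle \rangle$ trivially contains no term variables, so the Canonicity corollary applies, and its first clause tells us that $\delta$ reduces to a neutral proof: there is some neutral $\delta_n$ with $\delta \twoheadrightarrow \delta_n$.

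The crux is to show this is impossible, because $\delta_n$ would have to be closed yet neutral. First, since $\delta$ is typed in the empty context, it has no free variables (the free variables of a well-typed expression lie in the domain of its context); and reduction never introduces free variables, so $\delta_n$ is closed as well. It therefore suffices to observe that every neutral proof contains at least one free variable. This follows by a straightforward simultaneous induction on the grammar of neutral terms, neutral proofs and neutral paths: each base case ($x$, $p$, $e$) is itself a free variable, and in each inductive case ($M_n N$, $P_n^+$, $P_n^-$, $\delta_n \epsilon$, $P_n \supset^* Q$, $Q \supset^* P_n$, $(P_n)_{MN}Q$) the distinguished neutral subexpression carries a free variable by the induction hypothesis, which remains free in the whole expression. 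Hence no closed neutral proof exists, contradicting the closedness of $\delta_n$.

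This contradiction shows that no $\delta$ with $\vdash \delta : \bot$ can exist, establishing consistency. There is essentially no obstacle here beyond invoking the Canonicity corollary: the only content is the elementary fact, already used implicitly in its proof (``a closed expression cannot be neutral''), that neutrality is incompatible with being closed. The proof is thus immediate once canonicity is in hand, which is precisely the payoff of the reducibility argument developed in the preceding sections.
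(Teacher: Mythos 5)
Your proof is correct and follows exactly the argument the paper intends (the corollary is left without an explicit proof, but the key fact, that a closed expression cannot be neutral, is already invoked in the proof of the preceding Canonicity corollary): clause 1 of Canonicity forces $\delta$ to reduce to a neutral proof, which is impossible for a closed term since every neutral expression contains a free variable. Your explicit induction over the grammar of neutrals and the observation that reduction preserves closedness just fill in the details the paper takes for granted.
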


\begin{note}
We have not proved canonicity for terms.  However, we can observe that PHOML restricted to terms and types is just the simply-typed lambda calculus
with one atomic type $\Omega$ and two constants $\bot$ and $\supset$; and our reduction relation restricted to this fragment is head reduction.
Canonicity for this system is already a well-known result (see e.g. \cite[Ch. 4]{Girard1989}).
\end{note}

\section{Conclusion and Future Work}

We have presented a system with propositional extensionality, and shown that it satisfies the property of canonicity.
This gives hope that it will be possible to find a computation rule for homotopy type theory that satisfies canonicity, and
that does not involve extending the type theory, either with a nominal extension of the syntax as in cubical type theory or
otherwise.

We now intend to do the same for stronger and stronger systems, getting
ever closer to full homotopy type theory.  The next steps will be:

\begin{itemize}
\item
a system where the equations $M =_A N$ are objects of $\Omega$, allowing us to form propositions such as $M =_A N \supset N =_A M$.
\item
a system with universal quantification over the types $A$, allowing us to form propositions such as $\forall x:A. x =_A x$ and
$\forall x,y : A. x =_A y \supset y =_A x$
\end{itemize}

Ultimately, we hope to approach full homotopy type theory.  The study of how the reduction relation and its properties change as we
move up and down this hierarchy of systems should reveal facts about computing with univalence that might be lost when working in
a more complex system such as homotopy type theory or cubical type theory.

\bibliography{../../../../type}

\appendix

\section{Calculation in Cubical Type Theory}
\label{appendix:cubical}

\newcommand{\steptwo}{\mathsf{step}_2}
\newcommand{\stepthree}{\mathsf{step}_3}

%TODO Better name than step2 and step3
We can prove that, if $X$ is a proposition, then the type $\Sigma f:\top \rightarrow X. Path \, X \, x \, (f I)$ is contractible (we omit the details).  Let $e[X, x, p]$ be the term such that
\begin{align*} & X : \Prop, x : X.1, p : \Sigma f:\top \rightarrow X.1. \Path{X.1}{x}{(fI)} \\
\vdash & e[X, x, p] : \Path{(\Sigma f:\top \rightarrow X.1. \Path{X.1}{x}{(f I)})}{\langle \lambda t : \top . x, 1_{X.1} \rangle}{p}
\end{align*}

Let $\steptwo [X,x] \eqdef \langle \langle \lambda t : T.x, 1_{X.1} \rangle,
\lambda p : \Sigma f : T \rightarrow X.1. \Path{X.1}{x}{(fI)}. e[X, x, p] \rangle$.  Then
$$ X : \Prop, x : X.1 \vdash \steptwo [X, x] : \mathsf{isContr}(\Sigma f:T \rightarrow X.1. \Path{X.1}{x}{(fI)}) \enspace . $$
Let $\stepthree [X] \equiv \lambda x : X.1. \steptwo [X, x]$.  Then
$$ X : \Prop \vdash \stepthree [X] : \mathsf{isEquiv} \, (T \rightarrow X.1) \, X.1 \, (\lambda f : T \rightarrow X.1. f I) \enspace . $$
Let $E[X] \equiv \langle \lambda f : \top \rightarrow X.1. f I, step3[X] \rangle$.  Then
$$ X : \Prop \vdash E[X] : \mathsf{Equiv} \, (\top \rightarrow X.1) \, X.1 $$

From this equivalence, we want to get a path from $\top \rightarrow X.1$ to $X.1$ in $U$.  We apply the proof of univalence in \cite{cchm:cubical}

Let $P[X] \equiv \langle i \rangle \mathsf{Glue} [(i = 0) \mapsto (\top \rightarrow X.1, E[X]), (i = 1) \mapsto (X.1, equiv^k X.1)] X.1$.  Then
$$ X : \Prop \vdash P[X] : \Path{U}{(\top \rightarrow X.1)}{X.1} $$
Let $Q \equiv \langle i \rangle \lambda x : \Prop. P[X] i$.  Then
$$ \vdash Q : \Path{(\Prop \rightarrow U)}{F}{I} $$

This is the term in cubical type theory that corresponds to $\triplelambda e : x =_\Omega y.P$ in PHOML (formula \ref{eq:llleP}).  We now construct
terms corresponding to formulas (\ref{eq:llleP2}) and (\ref{eq:llleP3}):
$$ \vdash \langle i \rangle H (Q i) : \Path{U}{(\top \rightarrow \top)}{\top} $$
$$ \vdash \lambda x : \top. \comp^i (H (Q (1 - i))) [] x : \top \rightarrow \top \rightarrow \top $$

Let us write $\outputt$ for this term:
$$ \outputt \eqdef \lambda x : \top. \comp^i (H (Q (1 - i))) [] x \enspace . $$
And we calculate (using the notation from \cite{cchm:cubical} section 6.2):
\begin{align*}
& \quad \outputt \\
& = \lambda x : \top. \comp^i (Q (1 - i) \top) [] x \\
& = \lambda x : \top. \comp^i (P[\top] (1 - i)) [] x \\
& = \lambda x : \top. \comp^i (\mathsf{Glue}[(i = 1) \mapsto (\top \rightarrow \top, E[\top]), (i = 0) \mapsto (\top, \mathsf{equiv}^k \top)] \top) [] x \\
& = \lambda x : \top. \mathsf{glue} [ 1_\mathbb{F} \mapsto t_1 ] a_1 \\
& = \lambda x : \top. t_1 \\
& = \lambda x : \top. (\mathsf{equiv} \, E[\top] \, [] \, \mapid{\top}{x}).1 \\
& = \lambda x : \top. (\mathsf{contr} (step2[\top, \mapid{\top}{x}]) []).1 \\
& = \lambda x : \top. (\comp^i \\
& \qquad (\Sigma f : \top \rightarrow \top. \Path{\top}{(\mapid{\top}{x})}{(fI)}) \\
& \qquad [] \\
& \qquad \langle \lambda t : \top. \mapid{\top}{x}, 1_{mapid_\top(x)} \rangle).1 \\
& = \lambda x : \top. \mapid{\top \rightarrow \top}{(\lambda y : \top. \mapid{\top}{x})}
\end{align*}
Therefore,
\begin{align*}
& \outputt \, m \, n \\
& = \mapid{\top \rightarrow \top}{(\lambda y : \top. \mapid{\top}{m})} n \\
& \equiv (\comp^i (\top \rightarrow \top) [] (\lambda _ : \top. \mapid{\top}{m})) n \\
& = \mapid{\top}{\mapid{\top}{m}}
\end{align*}

\section{Proof of Confluence}
\label{section:confluence}

The proof follows the same lines as the proof given in \cite{luo:car}.

\begin{figure}
\paragraph*{Reflexivity}
$$ \infer{E \rhd E}{} $$
\paragraph*{Reduction on Terms}
$$ \infer{(\lambda x:A.M)N \rhd M[x:=N]}{} \quad
\infer{MN \rhd M'N}{M \rhd M'} \quad
\infer{\varphi \supset \psi \rhd \varphi' \supset \psi'}{\varphi \rhd \varphi' \quad \psi \rhd \psi'} $$
\paragraph*{Reduction on Proofs}
$$\infer{(\lambda p : \varphi . \delta)\epsilon \rhd \delta [ p := \epsilon ]}{} \quad
\infer{\reff{\varphi}^+ \rhd \lambda p : \varphi . p}{} \quad
\infer{\reff{\varphi}^- \rhd \lambda p : \varphi . p}{} $$
$$ \infer{\univ{\varphi}{\psi}{\delta}{\epsilon}^+ \rhd \delta}{} \quad
\infer{\univ{\varphi}{\psi}{\delta}{\epsilon}^- \rhd \epsilon}{} $$
$$ \infer{\delta \epsilon \rhd \delta' \epsilon}{\delta \rhd \delta'} \quad \infer{P^+ \rhd Q^+}{P \rhd Q} \quad
\infer{P^- \rhd Q^-}{P \rhd Q}$$
\paragraph*{Reduction on Paths}
$$\infer{(\triplelambda e:x =_A y.P)_{MN} Q \rhd P [x := M, y := N, e := Q]}{} $$
$$ \infer{\reff{\lambda x:A.M}_{N N'} P \rhd M \{ x:=P : N = N' \}}{} $$
$$ \infer{\reff{\varphi} \supset^* \reff{\psi} \rhd \reff{\varphi \supset \psi}}{} $$
$$ \infer{\reff{\varphi} \supset^* \univ{\psi}{\chi}{\delta}{\epsilon} \rhd
\univ{\varphi \supset \psi}{\varphi \supset \chi}{\lambda p:\varphi \supset \psi. \lambda q : \varphi. \delta (pq)}{\lambda p : \varphi \supset \chi. \lambda q : \varphi. \epsilon (pq)}}{} $$
$$ \infer{\univ{\varphi}{\psi}{\delta}{\epsilon} \supset^* \reff{\chi} \rhd
\univ{\varphi \supset \chi}{\psi \supset \chi}{\lambda p : \varphi \supset \chi. \lambda q : \psi. p (\epsilon q)}{\lambda p:\psi \supset \chi. \lambda q : \varphi. p(\delta q)}}{} $$
$$ \infer{\begin{array}{l}
\univ{\varphi}{\psi}{\delta}{\epsilon} \supset^* \univ{\varphi'}{\psi'}{\delta'}{\epsilon'} \\
 \rhd
\univ{\varphi \supset \varphi'}{\psi \supset \psi'}{\lambda p : \varphi \supset \varphi'. \lambda q : \psi. \delta' (p(\epsilon q))}{\lambda p : \psi \supset \psi'. \lambda q : \varphi. \epsilon' (p (\delta q))}
\end{array}}{} $$
$$ \infer{P_{MN} Q \rhd P'_{MN} Q}{P \rhd P'} \quad
\infer{\reff{M}_{NN'}P \rhd \reff{M'}_{NN'}P}{M \rhd N} \quad
\infer{P \supset^* Q \rhd P' \supset^* Q}{P \rhd P' \quad Q \rhd Q'} $$
\caption{Parallel One-Step Reduction}
\label{fig:POSR}
\end{figure}

\begin{definition}[Parallel One-Step Reduction]
Define the notion of \emph{parallel one-step reduction} $\rhd$ by the rules given in Figure \ref{fig:POSR}.
Let $\rhd^*$ be the transitive closure of $\rhd$.
\end{definition}

\begin{lemma}
\label{lm:rhdiff}
\begin{enumerate}
\item
If $E \rightarrow F$ then $E \rhd F$.
\item
If $E \twoheadrightarrow F$ then $E \rhd^* F$.
\item
If $E \rhd^* F$ then $E \twoheadrightarrow F$.
\end{enumerate}
\end{lemma}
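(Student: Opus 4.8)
The plan is to prove the three parts in the order given, with part~(3) resting on an auxiliary single-step simulation lemma. All three go by induction, and the organising observation is that $\rhd$ has a reflexivity rule whereas $\rightarrow$ does not: a single $\rightarrow$-step therefore corresponds to a $\rhd$-step in which every subexpression other than the contracted redex is related to itself by reflexivity.

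For part~(1) I would induct on the derivation of $E \rightarrow F$. Every base reduction rule of Figure~\ref{fig:reduction} appears verbatim as a rule of $\rhd$ in Figure~\ref{fig:POSR}, so those cases are immediate. For each congruence rule of $\rightarrow$ the induction hypothesis supplies a $\rhd$-step on the active subexpression, and I fill in the passive subexpressions with the reflexivity rule $E \rhd E$ before applying the matching $\rhd$-congruence rule. For instance, from $\varphi \rightarrow \varphi'$ the hypothesis gives $\varphi \rhd \varphi'$, and together with $\psi \rhd \psi$ the rule for $\supset$ yields $\varphi \supset \psi \rhd \varphi' \supset \psi$; the cases for application, $P^+$, $P^-$, $\supset^*$, and path application $P_{NN'}Q$ are analogous. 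Part~(2) then follows by induction on the length of the reduction sequence witnessing $E \twoheadrightarrow F$: the empty sequence gives $E = F$, handled by reflexivity so that $E \rhd^* E$; a sequence $E \rightarrow E' \twoheadrightarrow F$ gives $E \rhd E'$ by part~(1) and $E' \rhd^* F$ by the induction hypothesis, and transitivity of $\rhd^*$ closes the case.

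For part~(3) I would first establish the auxiliary claim that $E \rhd F$ implies $E \twoheadrightarrow F$, by induction on the derivation of $E \rhd F$; the full statement then follows because $\rhd^*$ is the transitive closure of $\rhd$ and $\twoheadrightarrow$ is transitive, so a chain $E = E_0 \rhd E_1 \rhd \cdots \rhd E_n = F$ is collapsed step by step. In the auxiliary claim the reflexivity rule gives $E \twoheadrightarrow E$, and each base rule is a single $\rightarrow$-step. The congruence cases rely on $\twoheadrightarrow$ being a congruence for every expression constructor, which is a routine consequence of the congruence rules for $\rightarrow$ together with transitivity (for example $M \twoheadrightarrow M'$ entails $MN \twoheadrightarrow M'N$).

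I expect the only genuinely delicate point to be the sequentialization forced by the multi-premise congruence rules of $\rhd$, namely those for $\supset$, $\supset^*$, and path application, where $\rhd$ contracts several independent redexes at once. For the rule $\varphi \supset \psi \rhd \varphi' \supset \psi'$ from $\varphi \rhd \varphi'$ and $\psi \rhd \psi'$, the induction hypothesis gives $\varphi \twoheadrightarrow \varphi'$ and $\psi \twoheadrightarrow \psi'$, and I recover the conclusion by performing the contractions one side at a time, $\varphi \supset \psi \twoheadrightarrow \varphi' \supset \psi \twoheadrightarrow \varphi' \supset \psi'$, using the congruence of $\twoheadrightarrow$ in each component. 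This ordering of simultaneous contractions into a single-step sequence — plus the bookkeeping over the analogous rules for $\supset^*$ and $P_{NN'}Q$ — is the heart of the argument; everything else is mechanical, and the overall structure mirrors the standard Tait--Martin-L\"of development followed in \cite{luo:car}.
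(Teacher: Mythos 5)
Your proposal is correct and follows essentially the same route as the paper, which simply notes that all three parts "are easily proved by induction"; your inductions (on the derivation for parts (1) and (3) via the auxiliary single-step simulation, and on the length of the reduction sequence for part (2)) are exactly the intended standard Tait--Martin-L\"of argument. The two points you highlight --- padding passive subexpressions with the reflexivity rule of $\rhd$, and sequentializing the multi-premise congruence rules into chains of $\twoheadrightarrow$-steps --- are precisely the (routine) content the paper leaves implicit.
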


\begin{proof}
These are easily proved by induction.
\end{proof}

Our reason for defining $\rhd$ is that it satisfies the diamond property:

\begin{lemma}[Diamond Property]
If $E \rhd F$ and $E \rhd G$ then there exists an expression $H$ such that $F \rhd H$ and $G \rhd H$.
\end{lemma}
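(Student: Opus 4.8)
The plan is to follow the standard Tait--Martin-L\"of strategy used in \cite{luo:car}: reduce the diamond property for $\rhd$ to a \emph{triangle} property mediated by a single most-developed reduct. Concretely, I would define, by recursion on the expression $E$, its \emph{complete development} $E^*$, obtained by contracting \emph{simultaneously} every redex that is reducible in $E$. Because $\rhd$ is a weak (call-by-name) relation, the recursion must respect exactly the congruences permitted in Figure \ref{fig:POSR}: it develops the two immediate subexpressions of $\supset$ and of $\supset^*$, and the function position of an application, an applied path, and a $(-)^{\pm}$; it contracts a head redex when one is present --- $\beta$ for terms and for proofs, the rules firing $\reff{\varphi}^{\pm}$ and $\univ{\varphi}{\psi}{\delta}{\epsilon}^{\pm}$, the four $\supset^*$ rules, and the $\triplelambda$- and applied-$\reff{}$ eliminations; and it leaves everything else --- bodies of binders, arguments of applications, and the operands frozen inside $\reff{-}$ and $\univ{-}{-}{-}{-}$ --- untouched. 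A short induction then establishes the auxiliary fact $E \rhd E^*$.

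The core step is the triangle lemma: if $E \rhd F$ then $F \rhd E^*$. I would prove this by induction on $E$, case-splitting on the rule that justifies $E \rhd F$; the diamond property is then immediate, since from $E \rhd F$ and $E \rhd G$ we obtain $F \rhd E^*$ and $G \rhd E^*$, so $H \eqdef E^*$ witnesses the claim. A pleasant feature of the weak setting, which I expect to exploit, is that no ``substitution respects $\rhd$'' lemma will be required. In a full-reduction calculus the $\beta$-case would force one to commute parallel reduction past substitution; here the principal subterms of every contraction redex sit in frozen positions, so the substitution and path substitution produced by a contraction (in $\beta$, in $\triplelambda$-elimination, and in the applied-$\reff{}$ rule) are always performed once, on un-reduced pieces. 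This sidesteps entirely the failure, recorded earlier, of substitution to commute with reduction on proofs and paths.

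The main obstacle is getting the definition of $E^*$ right for the weak relation and verifying the triangle lemma at the $\supset^*$ node. The temptation is to let $E^*$ contract \emph{all} redexes, including those created by developing a subexpression; but $\rhd$ cannot ``reduce then contract'' in a single step, since each contraction rule demands an operand that is \emph{syntactically} $\reff{-}$, $\univ{-}{-}{-}{-}$, a $\triplelambda$-path, or a $\lambda$. Hence a redex exposed only after developing $P$ or $Q$ in $P \supset^* Q$, or after developing the head of an application, must be left for the next step, and defining $E^*$ to match this one-step horizon is exactly what makes $F \rhd E^*$ provable. The attendant bookkeeping is heaviest for $\supset^*$: one must treat all four $\reff{-}/\univ{-}{-}{-}{-}$ contraction patterns alongside the congruence rule, and handle nested cases such as $(\reff{\varphi}\supset^*\reff{\psi})\supset^*\reff{\chi}$, where the outer $\supset^*$ becomes a redex only after the inner one has been contracted. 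Once one confirms that at each node the ``contract'' and ``develop the subexpressions'' dispositions do not genuinely overlap --- the head-redex forms being mutually exclusive and their operands frozen, so that there are no genuine critical pairs --- every case of the triangle lemma collapses to one appeal to the induction hypothesis followed by a single congruence or contraction step.
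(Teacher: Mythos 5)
Your proposal is correct, but it takes a genuinely different route from the paper's. The paper proves the diamond property directly, by case analysis on the pair of rules deriving $E \rhd F$ and $E \rhd G$, exhibiting a common reduct $H$ for each overlapping combination of rules; you factor the argument through a complete development $E^*$ and a triangle lemma ($E \rhd F$ implies $F \rhd E^*$), so that $H \eqdef E^*$ serves uniformly. Both arguments ultimately rest on the two observations you make explicit: the canonical forms $\lambda$, $\triplelambda$, $\reff{-}$ and $\univ{-}{-}{-}{-}$ have no $\rhd$-reducts other than themselves, so a contraction never genuinely overlaps a congruence at the same node; and every contraction rule of Figure \ref{fig:POSR} is premise-free, so no lemma about $\rhd$ commuting with substitution or path substitution is needed --- a real point here, since the paper records that reduction on proofs and paths does \emph{not} respect substitution. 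What your route buys is a linear rather than quadratic case analysis (each rule against $E^*$, instead of each pair of rules against each other); what it costs is the care needed in defining $E^*$ so that redexes created only by developing a subexpression are postponed to the next step. Your ``one-step horizon'' discussion addresses exactly this and is the correct adaptation of the Tait--Martin-L\"{o}f method to a weak, call-by-name relation.

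One caveat. Your argument reads Figure \ref{fig:POSR} literally, under which $\reff{\varphi} \rhd F$ forces $F \equiv \reff{\varphi}$. The paper's own worked case, namely $\reff{\phi} \supset^* \reff{\psi} \rhd \reff{\phi'} \supset^* \reff{\psi'}$ with $\phi \rhd \phi'$ and $\psi \rhd \psi'$ nontrivial, tacitly uses a congruence rule inside $\reff{-}$ that the figure does not list. Under that stronger reading the diamond property still holds (the common reduct is $\reff{\phi' \supset \psi'}$, reached from $\reff{\phi \supset \psi}$ by congruence), but your triangle lemma as stated would fail: the premise-free contraction $\reff{\varphi'} \supset^* \reff{\psi'} \rhd \reff{\varphi' \supset \psi'}$ cannot simultaneously develop $\varphi'$ and $\psi'$ to reach a fixed $E^*$. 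So you should state explicitly that you are working with the rules exactly as printed; with that proviso your plan goes through.
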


\begin{proof}
The proof is by case analysis on $E \rhd F$ and $E \rhd G$.  We give the details for one case here:
$$ \reff{\phi} \supset^* \reff{\psi} \rhd \reff{\phi \supset \psi} \mbox{ and } \reff{\phi} \supset^* \reff{\psi} \rhd \reff{\phi'} \supset^* \reff{\psi'} $$
where $\phi \rhd \phi'$ and $\psi \rhd \psi'$.  In this case, we have $\reff{\phi \supset \psi} \rhd \reff{\phi' \supset \psi'}$ and $\reff{\phi'} \supset^* \reff{\psi'}
\rhd \reff{\phi' \supset \psi'}$.
\end{proof}

\begin{corollary}
If $E \rhd^* F$ and $E \rhd^* G$ then there exists $H$ such that $F \rhd^* H$ and $G \rhd^* H$.
\end{corollary}

\begin{corollary}
If $E \twoheadrightarrow F$ and $E \twoheadrightarrow G$ then $F \twoheadrightarrow H$ and $G \twoheadrightarrow H$.
\end{corollary}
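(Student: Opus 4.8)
The plan is to reduce this statement about many-step reduction $\twoheadrightarrow$ to the confluence of parallel one-step reduction $\rhd^*$ that was just established in the preceding corollary, using the two-way correspondence between $\twoheadrightarrow$ and $\rhd^*$ recorded in Lemma \ref{lm:rhdiff}. The idea is that $\rhd$ was introduced precisely because it enjoys a genuine diamond property, whereas $\rightarrow$ does not; once confluence is known for $\rhd^*$, transporting it back to $\twoheadrightarrow$ is purely a matter of translating along Lemma \ref{lm:rhdiff}.

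Concretely, I would argue as follows. Starting from the hypotheses $E \twoheadrightarrow F$ and $E \twoheadrightarrow G$, I first apply part (2) of Lemma \ref{lm:rhdiff} to each to obtain $E \rhd^* F$ and $E \rhd^* G$. I then invoke the previous corollary, which is the confluence of $\rhd^*$, to produce an expression $H$ together with $F \rhd^* H$ and $G \rhd^* H$. Finally I apply part (3) of Lemma \ref{lm:rhdiff} to both of these to conclude $F \twoheadrightarrow H$ and $G \twoheadrightarrow H$, which is exactly the required confluence of $\twoheadrightarrow$.

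I do not expect any real obstacle in this final step: all of the genuine combinatorial content has already been discharged in the Diamond Property for $\rhd$ (the one-step commutation of parallel redexes, checked by case analysis on the two derivations $E \rhd F$ and $E \rhd G$) and in its transitive-closure corollary (the usual tiling argument turning one diamond into a grid). The only point that requires care is that Lemma \ref{lm:rhdiff} must give a \emph{lossless} passage in both directions, so that a confluence diagram for $\rhd^*$ can be converted into one for $\twoheadrightarrow$ and back without gaining or dropping any reduction steps; parts (2) and (3) of that lemma supply exactly this, so the translation goes through immediately and the proof is essentially a one-line appeal to the previous corollary.
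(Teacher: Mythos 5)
Your proof is correct and is exactly the argument the paper intends: the paper's proof reads ``Immediate from the previous corollary and Lemma \ref{lm:rhdiff},'' and your write-up simply spells out that translation (part (2) into $\rhd^*$, confluence of $\rhd^*$, part (3) back to $\twoheadrightarrow$). No issues.
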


\begin{proof}
Immediate from the previous corollary and Lemma \ref{lm:rhdiff}.
\end{proof}

\end{document}